\documentclass[twoside]{}

\usepackage{aimlmacro}

\usepackage{graphicx}
\usepackage{amsmath,amssymb}
\allowdisplaybreaks
\usepackage{xspace}
\usepackage[pdf,all]{xy}
\usepackage{tikz-cd}
\usepackage{enumitem}
\usepackage{booktabs,arydshln,tabularray,framed,dashbox}%
\usepackage{subcaption}
\usepackage{newtx}
\usepackage[T1]{fontenc}

\usepackage{algorithm,algorithmicx,algpseudocodex}
\algrenewcommand\algorithmicrequire{\textbf{Input:}}
\algrenewcommand\algorithmicensure{\textbf{Output:}}
\algnewcommand{\Initialize}[1]{\State\textbf{Initialize:} #1}
\algnewcommand\True{\textbf{true}\xspace}
\algnewcommand\False{\textbf{false}\xspace}
\usepackage{multicol}


\newcommand{\ab}{\ensuremath{A}\xspace}
\newcommand{\ag}{\text{\normalfont\textsf{Ag}}\xspace}

\newcommand{\CA}{\textnormal{\sf A}\xspace}
\newcommand{\CC}{\textnormal{\sf C}\xspace}
\newcommand{\CE}{\textnormal{\sf E}\xspace}
\newcommand{\CM}{\textnormal{\sf M}\xspace}
\newcommand{\CV}{\textnormal{\sf v}\xspace}
\newcommand{\CW}{\textnormal{\sf W}\xspace}

\newcommand{\lra}{\leftrightarrow}

\newcommand{\NEG}{\ensuremath{\mathord\sim}}
\renewcommand{\phi}{\varphi}
\newcommand{\pr}{\text{\normalfont\textsf{Prop}}\xspace}
\newcommand{\Ra}{\Rightarrow}
\newcommand{\ra}{\rightarrow}

\newcommand{\mbN}{\mathbb{N}}

\newcommand{\mcS}{\mathcal{S}}

\newcommand{\lang}{\ensuremath{\mathcal{EL}}\xspace}
\newcommand{\langc}{\ensuremath{\mathcal{ELC}}\xspace}
\newcommand{\langd}{\ensuremath{\mathcal{ELD}}\xspace}
\newcommand{\langm}{\ensuremath{\mathcal{ELF}}\xspace}
\newcommand{\langcd}{\ensuremath{\mathcal{ELCD}}\xspace}
\newcommand{\langcm}{\ensuremath{\mathcal{ELCF}}\xspace}
\newcommand{\langdm}{\ensuremath{\mathcal{ELDF}}\xspace}
\newcommand{\langcdm}{\ensuremath{\mathcal{ELCDF}}\xspace}
\newcommand{\langl}{\ensuremath{\mathcal{L}}\xspace}

\renewcommand{\l}{\text{\normalfont EL}\xspace}
\newcommand{\lc}{\text{\normalfont ELC}\xspace}
\newcommand{\ld}{\text{\normalfont ELD}\xspace}
\newcommand{\lm}{\text{\normalfont ELF}\xspace}
\newcommand{\lcd}{\text{\normalfont ELCD}\xspace}

\newcommand{\ldm}{\text{\normalfont ELDF}\xspace}
\newcommand{\lcdm}{\text{\normalfont ELCDF}\xspace}

\newcommand{\K}{\ensuremath{\mathbf{K}}\xspace}

\newcommand{\KB}{\ensuremath{\mathbf{EL}}\xspace}
\newcommand{\KBC}{\ensuremath{\mathbf{ELC}}\xspace}
\newcommand{\KBD}{\ensuremath{\mathbf{ELD}}\xspace}
\newcommand{\KBCD}{\ensuremath{\mathbf{ELCD}}\xspace}
\newcommand{\KBCDM}{\ensuremath{\mathbf{ELCDF}}\xspace}
\newcommand{\KBCM}{\ensuremath{\mathbf{ELCF}}\xspace}
\newcommand{\KBDM}{\ensuremath{\mathbf{ELDF}}\xspace}
\newcommand{\KBM}{\ensuremath{\mathbf{ELF}}\xspace}

\newcommand{\sysC}{\ensuremath{\mathbf{C}}\xspace}
\newcommand{\sysD}{\ensuremath{\mathbf{D}}\xspace}

\newcommand{\sysM}{\ensuremath{\mathbf{F}}\xspace}


\begin{document}

\begin{frontmatter}
  \title{Field Knowledge as a Dual to Distributed Knowledge}
  \footnote{This is a preprint of the paper published in Liao et al. (eds.) Fourth International Workshop on Logics for New-Generation Artificial Intelligence (LNGAI 2024), pp. 9--31, College Publications, 24 June 2024. \url{https://www.collegepublications.co.uk/LNGAI/?00004}}
  \subtitle{A characterization by weighted modal logic}
  \author{Xiaolong Liang}
  \address{School of Philosophy, Shanxi University\\92 Wucheng Road, Taiyuan, 030006, Shanxi, P.R. China}
  \author{Y\`{i} N. W\'{a}ng}
  \address{Department of Philosophy (Zhuhai), Sun Yat-sen University\\2 Daxue Road, Zhuhai, 519082, Guangdong, P.R. China}

  \begin{abstract}
The study of group knowledge concepts such as mutual, common, and distributed knowledge is well established within the discipline of epistemic logic. In this work, we incorporate epistemic abilities of agents to refine the formal definition of distributed knowledge and introduce a formal characterization of field knowledge. We propose that field knowledge serves as a dual to distributed knowledge. Our approach utilizes epistemic logics with various group knowledge constructs, interpreted through weighted models.
We delve into the eight logics that stem from these considerations, explore their relative expressivity and develop sound and complete axiomatic systems.\end{abstract}
\end{frontmatter}

\section{Introduction}

The introduction is segmented into two sections. In Section~\ref{subsec:notions}, we elucidate our interpretation of the concepts \emph{distributed knowledge} and \emph{field knowledge}. Section~\ref{subsec:approach} is dedicated to detailing our methodology for modeling these concepts within the context of weighted (or labeled) modal logic.

\subsection{Group notions of knowledge}
\label{subsec:notions}

Alice and Bob, both instrumentalists with additional expertise in philosophy and mathematics respectively, engage in a conversation that shapes and reflects their knowledge. Classical epistemic logic  \cite{Hintikka1962,FHMV1995,MvdH1995}  offers a framework for dissecting their individual and collective knowledge, utilizing tools like Kripke semantics among others.

Their mutual knowledge (a.k.a. everyone's knowledge or general knowledge) consists of statements known by both, essentially an intersection of their individual knowledge. In Kripke semantics, this is interpreted by the \emph{union} of their respective epistemic uncertainty relations. Common knowledge is recursive mutual knowledge: they know $\phi$, know that they know $\phi$, and so on, ad inf. It is modeled by the transitive closure of the union of their uncertainty relations.

Distributed knowledge signifies the sum of knowledge Alice and Bob would have after full communication, but it is not merely the union of what each knows. Though an interpretation based on the \emph{intersection} of their individual uncertainty relations does not fully align with its intended meaning either \cite{Roelofsen2007,AW2017rdk}, as it stands, this prevalent definition treats \emph{mutual knowledge as the semantic dual to distributed knowledge}.

In our scenario, we conceptualize distributed knowledge in light of the professional competencies of Alice and Bob. Their distributed knowledge of a statement $\phi$ is not understood as just a matter of aggregate knowledge, but rather the outcome of their combined expertise in musical instruments, philosophy, and mathematics. That is, $\phi$ is their distributed knowledge if their collaborative proficiency across these domains enables them to deduce $\phi$. Thus, we redefine \emph{distributed knowledge} as the \emph{union} of Alice and Bob's epistemic abilities, diverging from its classical interpretation.

Upon reevaluating distributed knowledge, we introduce the allied concept of \emph{field knowledge}. This notion encapsulates knowledge that stems from their shared discipline -- musical instruments, in this case. A statement $\phi$ falls under Alice and Bob's field knowledge if it is derivable exclusively from their musical background. The formal interpretation of this concept will be presented in Section~\ref{sec:semantics}, where we propose that \emph{field knowledge semantically functions as the dual to distributed knowledge}.

Developing a coherent characterization for the emergent concepts of distributed and field knowledge presents its challenges within classical epistemic logics. We aim to craft a comprehensive framework that encompasses these new ideas while preserving the established interpretations of mutual and common knowledge.

\subsection{Modeling knowledge in weighted modal logic}
\label{subsec:approach}

Even though the concept of \emph{similarity} is intrinsically linked to \emph{knowledge}, it has not been traditionally emphasized or explicitly incorporated in the classical representation of knowledge within the field of epistemic logic \cite{Hintikka1962,FHMV1995,MvdH1995}. Over recent years, researchers have started to probe this relationship more deeply, marking a fresh direction in the field \cite{NT2015,DLW2021}.  The technical framework for exploring this relationship has its roots in weighted modal logics \cite{LS1991,LM2014,HLMP2018}. This approach offers a quantitative way of considering similarity, allowing for a more nuanced understanding of knowledge.

In this paper, we adapt the concept of similarity from the field of data mining, where it is primarily used to quantify the likeness between two data objects. In data mining, distance and similarity measures are generally specific algorithms tailored to particular scenarios, such as computing the distance and similarity between matrices, texts, graphs, etc. (see, e.g., \cite[Chapter~3]{Aggarwal2015}). There is also a body of literature that outlines general properties of distance and similarity measures. For instance, in \cite{TSK2005}, it is suggested that typically, the properties of \emph{positivity} (i.e., $\forall x \forall y: s(x, y) = 1 \Ra x = y$) and \emph{symmetry} (i.e., $\forall x \forall y: s(x, y) = s(y, x)$) hold for $s(x, y)$ -- a binary numerical function that maps the similarity between points $x$ and $y$ to the range $[0, 1]$.

Our primary interest here does not lie in the measures of similarity themselves, but rather in modeling similarity and deriving from it the concepts of knowledge. Our work distinguishes itself from recent advancements in epistemic logic interpreted through the concepts of similarity or distance. One key difference is that we employ the standard language of epistemic logic. We do not explicitly factor in the degree of similarity into the language, maintaining the traditional structure of epistemic logic while reinterpreting its concepts in the light of similarity.

In our setting, the phrase ``$a$ knows $\phi$'' ($K_a\phi$) can be interpreted as ``$\phi$ holds true in all states that, in $a$'s perception by its expertise, resemble the actual state.'' A ``state'' in this context can be seen as a data object -- the focus of data mining. But it could also be treated as an epistemic object, a possible situation, and so forth. We generalize the similarity function by replacing its range [0, 1] with an arbitrary set of epistemic abilities. The degrees of similarity may not have a comparable or ordered relationship.

The primary focus of this paper is on \emph{group knowledge}, as elaborated in Section~\ref{subsec:notions}. We explore epistemic logics across all combinations of these group knowledge notions. As mutual knowledge is definable by individual knowledge (with only finitely many agents), we have formulated eight logics (with or without common, distributed, and field knowledge). The syntax and semantics of them are introduced in Sections~\ref{sec:syntax}--\ref{sec:models}, and in Section~\ref{sec:expressivity}, we compare the expressive power of these languages. 

For the axiomatization of the logics, we introduce sound and strongly complete axiomatic systems for the logics excluding common knowledge. For those incorporating common knowledge, we present sound and weakly complete axiomatic systems (owing to the lack of compactness for the common knowledge operators). These systems are then categorized based on whether their completeness results are obtainable via translation of models (Section~\ref{sec:completeness1}) or the canonical model method (Section~\ref{sec:completeness2}), shown via a path-based canonical model (Sections~\ref{sec:completeness3} and \ref{sec:completeness4}), or require a finitary method leading to a weak completeness result (Section~\ref{sec:completeness5}).

\section{Logics}

In this section, we present a comprehensive framework composed of eight distinctive logics. We supplement our discussion with illustrative examples, offering a visual representation of the models and their accompanying semantics.

\subsection{Syntax}\label{sec:syntax}\label{sec:semantics}

Our study utilizes formal languages rooted in the standard language of multi-agent epistemic logic \cite{FHMV1995,MvdH1995}, with the addition of modalities that represent group knowledge constructs. We particularly concentrate on the constructs of \emph{common knowledge}, \emph{distributed knowledge} and \emph{field knowledge}.

In terms of our assumptions, we consider \pr as a countably infinite set of propositional variables, and \ag as a finite, nonempty set of agents.

\begin{definition}(formal languages)
The languages utilized in our study are defined by the following rules, where the name of each language is indicated in parentheses on the left-hand side:
\vspace{-1.5ex}
\begin{align*}
&(\lang)&&\phi ::= p \mid \neg \phi \mid (\phi \ra \phi) \mid K_a\phi\\[-3.6pt]
&(\langc)&&\phi ::= p \mid \neg \phi \mid (\phi \ra \phi) \mid K_a\phi \mid C_G\phi\\[-3.6pt]
&(\langd)&&\phi ::= p \mid \neg \phi \mid (\phi \ra \phi) \mid K_a\phi \mid D_G\phi\\[-3.6pt]
&(\langm)&&\phi ::= p \mid \neg \phi \mid (\phi \ra \phi) \mid K_a\phi \mid F_G\phi\\[-3.6pt]
&(\langcd)&&\phi ::= p \mid \neg \phi \mid (\phi \ra \phi) \mid K_a\phi \mid C_G\phi \mid D_G\phi \\[-3.6pt]
&(\langcm)&&\phi ::= p \mid \neg \phi \mid (\phi \ra \phi) \mid K_a\phi \mid C_G\phi \mid F_G\phi \\[-3.6pt]
&(\langdm)&&\phi ::= p \mid \neg \phi \mid (\phi \ra \phi) \mid K_a\phi \mid D_G\phi \mid F_G\phi \\[-3.6pt]
&(\langcdm)&&\phi ::= p \mid \neg \phi \mid (\phi \ra \phi) \mid K_a\phi \mid C_G\phi \mid D_G\phi \mid F_G\phi\\[-2em]
\end{align*}
where $p \in \pr$, $a \in \ag$, and $G$ represents a nonempty subset of \ag, signifying a group. We also employ other boolean connectives, including conjunction ($\wedge$), disjunction ($\vee$), and equivalence ($\lra$). $E_G \phi$ is a shorthand for $\bigwedge_{a\in G} K_a\phi$ (note that $G$ is finite).
\end{definition}

We employ formulas such as $K_a \phi$ to depict: ``Agent $a$ knows $\phi$.'' This is often referred to as \emph{individual knowledge}. Similarly, formulas like $C_G\phi$, $D_G\phi$, $E_G\phi$ and $F_G\phi$ are used to convey that $\phi$ is \emph{common knowledge}, \emph{distributed knowledge}, \emph{mutual knowledge} (or \emph{everyone's knowledge}) and field knowledge of group $G$, respectively. When the group $G$ is a simple set, e.g., $\{a,b\}$, we write $C_{ab} \phi$ as a shorthand for $C_{\{a,b\}} \phi$, and likewise for the operators $D_{ab}$, $E_{ab}$ and $F_{ab}$.

Before delving into the formal semantics of these formulas, it is important to first establish the semantic models that will be used for the intended logics.

\subsection{Semantics}\label{sec:models}

We introduce a type of \emph{similarity models} for the interpretation of the languages.

\begin{definition}(similarity models)\label{def:models}
A \emph{similarity model} (\emph{model} for short) is a quintuple $(W,\ab,E,C,\nu)$ where:
\begin{itemize}[itemsep=0pt]
\item $W$ is a nonempty set of states or nodes, referred to as the \emph{domain};
\item \ab is an arbitrary set of abstract epistemic \emph{abilities} (e.g., one's expertise or profession), which could be empty, finite or infinite;%
\footnote{We have opted not to fix the set $A$ of abilities as a given parameter of the logic, in contrast to the set 
\ag of agents. The primary reason for this decision is our intention to examine models that may extend the set $A$ (see Section~\ref{sec:completeness1}). It is important to note that the validities and subsequent axiomatization of our logic remain unaffected when $A$ is considered to be an infinite set of abilities defined as a parameter of the logic.}
\item $E : W \times W \to \wp(\ab)$, known as an \emph{edge function}, assigns each pair of states a set of epistemic abilities, meaning that the two states are indistinguishable for individuals possessing only these epistemic abilities;
\item $C: \ag \to \wp(\ab)$ is a \emph{capability function} that assigns each agent a set of epistemic abilities;
\item $\nu: W \to \wp(\pr)$ is a valuation.
\end{itemize}
and conforms to the following conditions (for all $s, t \in W$):
\begin{itemize}[itemsep=0pt]
\item Positivity: if $E(s,t) = \ab$, then $s = t$;
\item Symmetry: $E(s,t) = E(t,s)$.
\end{itemize}
\end{definition}

The above definition warrants further elucidation. Firstly, our approach adopts a broad interpretation of epistemic abilities that may not necessarily be arranged in a linear order, although such an arrangement is plausible \cite{DLW2021,LW2022}. Secondly, we perceive the edge function $E$ as a representation of the relation of similarity between states. The conditions of positivity and symmetry serve as generalized forms of common conditions employed to characterize similarity between data objects, as demonstrated in \cite{TSK2005}.%
\footnote{An implicit condition often assumed, the converse of positivity, posits $E(s,t)=A$ if $s=t$. This condition entails the reflexivity of graphs, depicted by the characterization axiom T (i.e., $K_a\phi \ra \phi$). In the realm of data mining, this condition implies that if two data objects are identical (i.e., they refer to the same data object), they would receive the maximum value from any similarity measure. This, however, is not always guaranteed.} Transitivity is usually not a characteristic of this framework (that $x$ and $y$, $y$ and $z$ are similar, does not necessarily mean that $x$ and $z$ are similar), resulting in the failure to uphold the principle of positive introspection ($K_a \phi \ra K_a K_a \phi$). Although it is easy to impose transitivity, we choose not to enforce it here. Our framework allows a more discerning evaluation of the tenability of positive introspection, and for an examination of other significant constraints, see \cite{LW2022}. Thirdly, in this context, similarities are deemed objective, signifying their constancy across diverse agents.

\begin{example}\label{ex1}
Alice and Bob are denoted as $a$ and $b$, respectively. Consider the fields mentioned in the beginning of the paper: musical instruments ($\alpha$), philosophy ($\beta$) and mathematics ($\gamma$), which are regarded as epistemic abilities in this example. As set up in the beginning of the paper, Alice is a philosopher, Bob a mathematician, and both of them are also instrumentalists. Four possible states are named $s_1, \dots, s_4$, in which $s_1$ is the factual state. From the viewpoint of an instrumentalist, all the states look no difference. From the perspective of a philosopher, no difference is between $s_1$ and $s_3$, and between $s_2$ and $s_4$. As for a mathematician, $s_1$ is indistinguishable to $s_2$, and $s_3$ to $s_4$. Consider the following propositions:

$p$: A standard modern piano has 88 keys in total. 

$q$: Knowledge is defined by ``justified true belief.''

$r$: Fermat's Last Theorem has been proved.

The above scenario can be abstracted to a pointed model $(M, s_1)$, such that $M = (W, A, E, C, \nu)$ and:
\begin{itemize}[itemsep=0pt]
	\item $W=\{s_1,s_2,s_3,s_4\}$; $\ab=\{\alpha,\beta,\gamma\}$;
	\item $E(s_1,s_2)=E(s_3,s_4)=\{\alpha,\gamma\}$, $E(s_1,s_3)=E(s_2,s_4)=\{\alpha,\beta\}$, $E(s_1,s_4)=E(s_2,s_3)=\{\alpha\}$, and for all $x \in W$, $E (x, x) = \{ \alpha, \beta, \gamma \}$;
	\item $C(a) = \{\alpha,\beta\}$ and $C(b) = \{\alpha,\gamma\}$;
	\item $\nu(s_1) = \{p,r\}$, $\nu(s_2) = \{p,q,r\}$, $\nu(s_3) = \{p\}$ and $\nu(s_4) = \{p,q\}$.
\end{itemize}

Figure~\ref{fig:model1} illustrates the model $M$ introduced above, where the factual state $s_1$ is framed by a rectangle and other states by an eclipse.
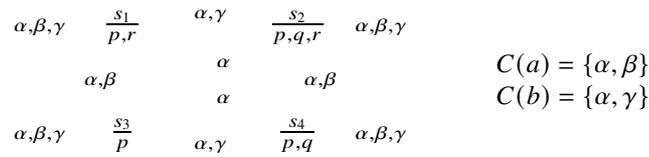
\begin{figure}[h]
\centering
\parbox{.6\textwidth}{%
\centering
$\xymatrix@R=2em@C=4em{
*++[F]{\frac{s_1}{p,r}}
\ar@{-}@(ul,dl)_{\alpha, \beta, \gamma}
\ar@{-}[r]^{\alpha,\gamma}
\ar@{-}[d]_{\alpha,\beta}
\ar@{-}@/^.3pc/[dr]^{\alpha}
&*++o[F]{\frac{s_2}{p,q,r}}
\ar@{-}@(ur,dr)^{\alpha, \beta, \gamma}
\ar@{-}[d]^{\alpha,\beta}
\ar@{-}@/^.3pc/[ld]^{\alpha}
\\
*++o[F]{\frac{s_3}{p}}
\ar@{-}@(ul,dl)_{\alpha, \beta, \gamma}
\ar@{-}[r]_{\alpha,\gamma}
&*++o[F]{\frac{s_4}{p,q}}
\ar@{-}@(ur,dr)^{\alpha, \beta, \gamma}
}$
}
\parbox{.23\textwidth}{%
$\begin{array}{l}
C(a) = \{\alpha, \beta\}\\
C(b) = \{\alpha, \gamma\}\\
\end{array}$
}
\caption{Illustration of the model in Example~\ref{ex1}.}\label{fig:model1}
\end{figure}

In the real world ($s_1$), one may come up with the following true sentences:
	\begin{itemize}[leftmargin=1em, itemsep=0pt]
	\item $K_a (p {\wedge} \neg q) {\wedge} \neg (K_a r {\vee} K_a \neg r)$ (Alice knows $p$ and $\neg q$, but doesn't know whether $r$.)
	\item $K_b (p {\wedge} r)\land \neg (K_a q {\vee} K_a \neg q)$ (Bob knows that $p$ and $r$, but doesn't know whether $q$.)
	\item $D_{ab}(p \wedge \neg q \wedge r)$ (It is Alice and Bob's distributed knowledge that $p$, not $q$, and $r$.)
	\item $F_{ab} p \land \neg (F_{ab} q \vee F_{ab} \neg q) \land \neg (F_{ab} r \vee F_{ab} \neg r)$ (While $p$ is Alice and Bob's field knowledge, $q$ and $r$ are not.)
	\end{itemize}
\end{example}

We now introduce a formal semantics that makes the model in Example~\ref{ex1} indeed yields the true sentences listed above.

\begin{definition}\label{def:semantics}
Given a formula $\phi$, a model $M = (W,\ab,E,C,\nu)$ and a state $s \in W$, we say $\phi$ is \emph{true} or \emph{satisfied} at $s$ in $M$, denoted $M,s \models \phi$, if the following hold (the case for $E_G\psi$ is redundant, but included for clarification):
$$\begin{array}{lll}
M,s \models p & \iff & p\in \nu(s)\\
M,s \models \neg\psi & \iff & \text{not } M,s \models \psi\\
M,s \models (\psi \ra \chi) & \iff & \text{if } M,s \models \psi \text{ then } M,s \models \chi\\
M,s \models K_a\psi & \iff & \text{for all $t\in W$, if $C(a)\subseteq E(s,t)$ then $M,t \models \psi$}\\
M,s \models E_G\psi & \iff & \text{$M,s \models K_a\psi$ for all $a \in G$}\\
M,s \models C_G\psi & \iff & \text{for all positive integers $n$, $M, s \models E_G^n \psi$}\\
M,s\models D_G\psi & \iff & \text{for all $t \in W$, if $\bigcup_{a \in G} C(a) \subseteq E(s,t)$ then $M,t \models \psi$}\\
M,s\models F_G\psi & \iff & \text{for all $t \in W$, if $\bigcap_{a \in G}C(a) \subseteq E(s,t)$ then $M,t \models \psi$,}\\
\end{array}$$
where $E_G^n\psi$ is defined recursively as $E^1_G E_G^{n-1}\psi$, with $E_G^1\psi$ to mean $E_G \psi$.
The concepts of \emph{validity} and \emph{satisfiability} have their classical meaning.
\end{definition}

In the definition above, the interpretation of $K_a \psi$ includes a condition ``$C(a) \subseteq E(s,t)$,'' which intuitively means that, ``Agent $a$, with its abilities, cannot discern between states $s$ and $t$.'' Thus, the formula $K_a \psi$ expresses that $\psi$ is true in all states $t$ that $a$ cannot differentiate from the current state $s$.

$E_G \psi$ stands for the conventional notion of \emph{everyone's knowledge}, or \emph{mutual knowledge} as we call, stating that ``Everyone in group $G$ knows $\psi$'' (see \cite{HM1992} for details).

\emph{Common knowledge} ($C_G\psi$) follows the classical fixed-point interpretation as $E_G C_G \psi$. In other words, $C_G\psi$ implies that, ``Everyone in group $G$ knows that $\psi$ is true, and everyone in $G$ knows about this first-order knowledge, and also knows about this second-order knowledge, and so on.''

The concept of \emph{distributed knowledge} ($D_G\psi$) in this paper diverges from the traditional definitions found in literature. We redefine distributed knowledge as being attainable by pooling together individual abilities. In practice, we swap the intersection of individual uncertainty relations with the union of individual epistemic abilities. Thus, $\psi$ is deemed distributed knowledge among group $G$ if and only if $\psi$ holds true in all states $t$ that, when utilizing all the epistemic abilities of agents in group $G$, cannot be differentiated from the present state.

An additional type of group knowledge, termed \emph{field knowledge} ($F_G\psi$), states that $\psi$ is field knowledge if and only if $\psi$ is true in all states $t$ that, using the shared abilities of group $G$, cannot be differentiated from the current state. We will examine its logical properties in greater detail later on.

Upon defining the semantics, we derive eight logics, each associated with one of the languages. We denote these logics using upright Roman capital letters. E.g., the logic corresponding to the interpretation of the language \langm is represented as \lm.

\begin{example}\label{ex2}
Consider the model illustrated in Figure~\ref{fig:model2}, we have the following:
\begin{enumerate}[itemsep=0pt]
	\item $M,s_2\models C_{ab}p\land E_{ab}p\land D_{ab}p\land \neg F_{ab} p$
	\quad (note that for $C_{ab}p$ we check whether $p$ is true in all states along an ``$ab$-path'' -- connected via $\{\lambda,\pi\}$, $\{\lambda,\mu\}$ or $\{\lambda,\pi,\mu\}$ edges -- that is, whether $p$ is true at $s_2$ through $s_4$, regardless of $s_1$.)
	\item $M,s_2\models F_{ab}q\land E_{ab}q\land D_{ab}q \land \neg C_{ab}q$
	\item $M,s_3\models D_{ab}r\land \neg K_a r\land \neg K_b r\land \neg E_{ab} r$
\end{enumerate}
\begin{figure}[h]
	\centering
	\parbox{.75\textwidth}{%
		\centering
		$\xymatrix@R=2em@C=4em{
			*++o[F]{\frac{s_1}{q}}
			\ar@{-}@(ul,ur)^{\lambda}
			\ar@{-}[r]_{\lambda}
			&*++o[F]{\frac{s_2}{p,q}}
			\ar@{-}@(ul,ur)^{\lambda, \pi, \mu}
			\ar@{-}[r]_{\lambda,\pi}
			&*++o[F]{\frac{s_3}{p,q,r}}
			\ar@{-}@(ul,ur)^{\lambda, \pi, \mu}
			\ar@{-}[r]_{\lambda,\mu}
			&*++o[F]{\frac{s_4}{p}}
			\ar@{-}@(ul,ur)^{\lambda, \pi}
		}$
	}
	\parbox{.23\textwidth}{%
		$\begin{array}{l}
			C(a) = \{\lambda, \pi\}\\
			C(b) = \{\lambda, \mu\}\\
		\end{array}$
	}
	\caption{Illustration of a model for Example~\ref{ex2}. We do not draw a line between two nodes when the edge between them is with no label (i.e., labeled by an empty set, e.g., between $s_1$ and $s_3$).}\label{fig:model2}
\end{figure}
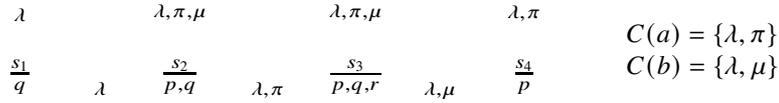
\end{example}

From the above, it is clear that none of  the following formula schemes are valid: $C_G\phi\ra F_G\phi$, $E_G\phi\ra F_G\phi$, $D_G\phi\ra F_G\phi$, $F_G\phi\ra C_G\phi$, $E_G\phi\ra C_G\phi$, $D_G\phi\ra C_G\phi$, $D_G\phi\ra E_G\phi$, $D_G\phi\ra K_a\phi$ (where $a\in G$). In particular, that $\phi$ is common knowledge implies that $\phi$ is distributed knowledge ($\models C_G\phi \ra D_G\phi$), but does not imply that it is field knowledge ($ \not\models C_G \phi \ra F_G \phi $). The underlying reasoning for this is that when professions intersect, the range of uncertain states can potentially expand significantly -- sometimes even more so than the increase that occurs when taking the transitive closure in the case of common knowledge. Consequently, this expansion of uncertainty can lead to a substantial contraction of field knowledge.
Nonetheless, the standard principles pertaining to individual, common, and distributed knowledge from classical logic remain applicable, as indicated by the following proposition.

\begin{proposition}
We have the following validities for any given formula $\phi$, any agent $a$ and any groups $G$ and $H$ (proofs omitted):
\vspace{-1ex}
\begin{multicols}{2}
\begin{enumerate}[itemsep=0pt]
\item $K_a(\phi \ra \psi) \ra (K_a\phi \ra K_a\psi)$
\item $\phi \ra K_a \neg K_a \neg \phi$
\item $C_{G}\phi \ra \bigwedge_{a\in G} K_a(\phi \wedge C_G \phi)$
\item $D_{\{a\}}\phi \lra K_a\phi$
\item $D_G \phi \ra D_H \phi$ (with $G \subseteq H$)
\item $\phi \ra D_G \neg D_G \neg \phi$
\item $F_{\{a\}}\phi \lra K_a\phi$
\item $F_G \phi \ra F_H \phi$ (with $H \subseteq G$)
\item $\phi \ra F_G \neg F_G \neg \phi$
\end{enumerate}
\end{multicols}
\end{proposition}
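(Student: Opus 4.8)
The plan is to verify each of the nine validities directly from Definition~\ref{def:semantics}, treating them in three natural clusters. For the first cluster (items 1--3), these are the classical K-axiom, the B-axiom for individual knowledge, and the fixed-point property for common knowledge; since the truth condition for $K_a$ is the standard box over the relation $R_a := \{(s,t) : C(a) \subseteq E(s,t)\}$, item~1 is the usual normal-modal-logic argument (suppose $K_a(\phi\ra\psi)$ and $K_a\phi$ hold at $s$; take any $t$ with $C(a)\subseteq E(s,t)$ and chase the definitions). For item~2 I would use symmetry of $E$: from $M,s\models\phi$, to show $M,s\models K_a\neg K_a\neg\phi$ take $t$ with $C(a)\subseteq E(s,t)$; I must show $M,t\models\neg K_a\neg\phi$, i.e.\ there is $u$ with $C(a)\subseteq E(t,u)$ and $M,u\models\phi$ --- take $u=s$, using $E(t,s)=E(s,t)\supseteq C(a)$. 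Item~3 unfolds $C_G\phi$ via the clause ``$M,s\models E_G^n\phi$ for all $n$'': I'd show $M,s\models K_a(\phi\wedge C_G\phi)$ for each $a\in G$ by checking, at any $t$ with $C(a)\subseteq E(s,t)$, both $M,t\models\phi$ (from the $n=1$ instance, since $C(a)\subseteq E(s,t)$ makes $t$ an $E_G$-successor through $a$) and $M,t\models C_G\phi$ (from the $E_G^{n+1}$ instances for all $n$).

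For the distributed-knowledge cluster (items 4--6), the key observation is that the truth condition for $D_G$ is again a box, over the relation $R^D_G := \{(s,t) : \bigcup_{a\in G}C(a)\subseteq E(s,t)\}$, and moreover $D_G$ satisfies positivity/symmetry just like $K_a$ (positivity: if $\bigcup_{a\in G}C(a)\subseteq E(s,t)$ we cannot in general conclude $E(s,t)=A$, so $D_G$ is a genuine box --- wait, we only need the B-axiom, which follows from symmetry of $E$, so this is fine). Item~4: when $G=\{a\}$, $\bigcup_{b\in G}C(b)=C(a)$, so $R^D_{\{a\}}=R_a$ and the two truth conditions literally coincide. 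Item~5: if $G\subseteq H$ then $\bigcup_{a\in G}C(a)\subseteq\bigcup_{a\in H}C(a)$, hence $R^D_H\subseteq R^D_G$, so every $D_G$-successor is a $D_H$-successor --- wait, I need the containment the right way: $D_G\phi\ra D_H\phi$ means fewer $H$-successors, i.e.\ $R^D_H\subseteq R^D_G$; from $\bigcup_G C(a)\subseteq\bigcup_H C(a)$ we get that $\bigcup_H C(a)\subseteq E(s,t)$ implies $\bigcup_G C(a)\subseteq E(s,t)$, i.e.\ $R^D_H\subseteq R^D_G$, correct. Item~6 is the B-axiom for $D_G$, proved exactly as item~2 using symmetry of $E$ with the witness $u=s$.

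For the field-knowledge cluster (items 7--9), the truth condition for $F_G$ is the box over $R^F_G := \{(s,t) : \bigcap_{a\in G}C(a)\subseteq E(s,t)\}$, so the arguments mirror items 4--6 with intersection replacing union and the subset direction reversed: item~7 holds because $\bigcap_{a\in\{a\}}C(a)=C(a)$; item~8 holds because $H\subseteq G$ gives $\bigcap_{a\in G}C(a)\subseteq\bigcap_{a\in H}C(a)$, hence $R^F_G\subseteq R^F_H$, so $F_G\phi\ra F_H\phi$; item~9 is the B-axiom for $F_G$ via symmetry. I do not expect any real obstacle here --- the whole proposition is a routine unfolding of the semantics --- and indeed the paper itself says ``proofs omitted.'' The only point requiring the slightest care is the bookkeeping in item~3 (keeping the $E_G^n$ indexing straight when splitting off the outermost $E_G$) and remembering to invoke symmetry of $E$ (rather than reflexivity, which need not hold) in the three B-axiom arguments; transitivity is deliberately \emph{not} used anywhere, consistent with the earlier remark that positive introspection fails.
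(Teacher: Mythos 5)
The paper omits the proof of this proposition, and your routine unfolding of the semantics is exactly the intended argument: each of $K_a$, $D_G$, $F_G$ is a normal box over a relation of the form $\{(s,t) : X \subseteq E(s,t)\}$, which is symmetric because $E$ is, giving items 1, 2, 4, 6, 7, 9; item 3 follows by peeling one $E_G$ off each $E_G^{n+1}$; and items 5 and 8 are monotonicity of the side condition in $X$. All of this is correct, including your care to use symmetry rather than reflexivity for the B-instances. The one slip is in item 8: from $H \subseteq G$ you correctly get $\bigcap_{a\in G}C(a) \subseteq \bigcap_{a\in H}C(a)$, but this makes the membership condition for $R^F_G$ \emph{weaker}, so the resulting inclusion is $R^F_H \subseteq R^F_G$ (not $R^F_G \subseteq R^F_H$ as you wrote); that is the direction you need, since a box over the larger relation $R^F_G$ entails the box over the smaller $R^F_H$, yielding $F_G\phi \ra F_H\phi$. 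Your own worked-out direction check in item 5 is the template --- apply it verbatim with intersections and the inclusion comes out right.
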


\subsection{Expressivity}
\label{sec:expressivity}

We adopt the conventional method for assessing a language's expressive power, which entails benchmarking it against the expressive capabilities of other languages. For an exact articulation of the relations in expressive power between two compatibly interpreted languages, we direct the reader to \cite[Def.~8.2]{vDvdHK2008}.

It is evident that the expressive power of all eight languages under consideration can be evaluated against one another. The comparative outcomes are encapsulated in Figure~\ref{fig:expressivity}, and the proofs are left in Appendix~\ref{sec:app-exp}.

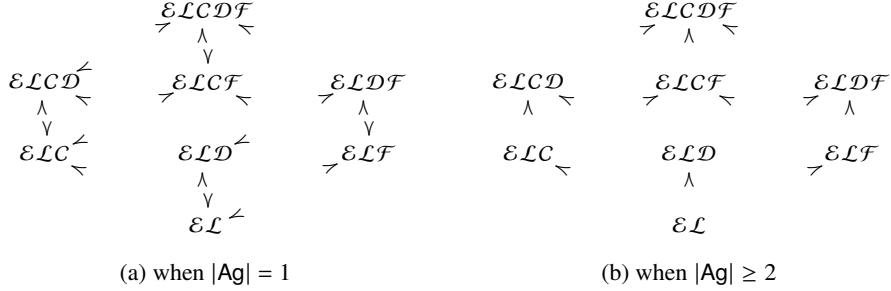
\begin{figure}[htbp]
\footnotesize
\subcaptionbox{when $|\ag|=1$}[.47\textwidth]{%
$\xymatrix@C=2.7em@R=1.6em{
&*++o[F]{\langcdm}\ar@<1pt>[dl]\ar@<1pt>[d]
\\
*++o[F]{\langcd}\ar@<1pt>[ur]\ar@<1pt>[d]&
*++o[F]{\langcm}\ar@<1pt>[u]\ar@<1pt>[dl]|(.5)\hole&
*++o[F]{\langdm}\ar[ul]\ar@<1pt>[dl]\ar@<1pt>[d]
\\
*++o[F]{\langc}\ar@<1pt>[u]\ar@<1pt>[ur]|(.5)\hole&
*++o[F]{\langd}\ar[ul]\ar@<1pt>[ur]\ar@<1pt>[d]&
*++o[F]{\langm}\ar[ul]|(.5)\hole\ar@<1pt>[u]\ar@<1pt>[dl]
\\
&*++o[F]{\lang}\ar[ul]\ar@<1pt>[u]\ar@<1pt>[ur]
}$
}
\hfill
\subcaptionbox{when $|\ag| \geq 2$}[.47\textwidth]{%
$\xymatrix@C=2.7em@R=1.6em{
&*++o[F]{\langcdm}
\\
*++o[F]{\langcd}\ar[ur]&*++o[F]{\langcm}\ar[u]&*++o[F]{\langdm}\ar[ul]
\\
*++o[F]{\langc}\ar[u]\ar[ur]|(.5)\hole&*++o[F]{\langd}\ar[ul]\ar[ur]&*++o[F]{\langm}\ar[ul]|(.5)\hole\ar[u]
\\
&*++o[F]{\lang}\ar[ul]\ar[u]\ar[ur]
}$
}
\caption{The above two diagrams illustrate the relative expressive power of the languages. An arrow pointing from one language to another implies that the second language is at least as expressive as the first. The ``at least as expressive as'' relationship is presumed to be \emph{reflexive} and \emph{transitive}, meaning that a language is considered at least as expressive as another if a path of arrows exists leading from the second to the first (self-loops exist for all, but omitted). A lack of a path of arrows from one language to another indicates that the first language is \emph{not} at least as expressive as the second. This implies that either the two languages are incomparable or that the first language is more expressive than the second.\label{fig:expressivity}}
\end{figure}

\section{Axiomatization}
\label{sec:ax}

We will present sound and complete axiomatic systems for the logics introduced in the preceding section. The names of these systems will be designated with bold capital letters. For example, the axiomatic system for the logic \lm is denoted as \KBM.

\subsection{Axiomatic systems}

The \K is a widely recognized axiomatic system for modal logic (here it refers to the multi-modal version with each $K_a$ functioning as a box operator). For simplicity, the axiom schemes are referred to as axioms in this context. The axiom system \textbf{KB} is obtained by augmenting the system \K with an additional axiom B (i.e., $\phi \ra K_a \neg K_a \neg \phi$). In this context, we represent \textbf{KB} as \K $\oplus$ B, where the symbol $\oplus$ acts like a union operation for sets of axioms and/or rules. For a comprehensive understanding of these axiomatic systems for modal logic, please refer to, say, \cite{BdRV2001}. Now, the system \KB that we introduce for our base logic \l is in fact \textbf{KB}.

Common knowledge is characterized by a set \sysC consisting of the following two inductive principles, represented by an axiom and a rule, which can be found in \cite{FHMV1995}:
\begin{itemize}[leftmargin=4em, itemsep=0pt]
\item[\textnormal{(C1)}] $C_{G}\phi \ra \bigwedge_{a\in G} K_a(\phi \wedge C_G \phi)$
\item[\textnormal{(C2)}] from $\phi \ra \bigwedge_{a\in G} K_a (\phi \wedge \psi)$ infer $\phi \ra C_G\psi$
\end{itemize}
Our the system \KBC is then represented as $\KB \oplus \sysC$.

Distributed knowledge is characterized by a set \sysD of additional axioms:
\begin{itemize}[leftmargin=4em, itemsep=0pt]
\item[\textnormal{(KD)}] $D_G(\phi \ra \psi) \ra (D_G\phi \ra D_G\psi)$
\item[\textnormal{(D1)}] $D_{\{a\}}\phi\lra K_a\phi$
\item[\textnormal{(D2)}] $D_G\phi\ra D_H\phi$ with $G\subseteq H$
\item[\textnormal{(BD)}] $\phi\ra D_G\neg D_G\neg\phi$
\end{itemize}
The resulting system for the logic \ld is then denoted as $\KBD = \KB \oplus \sysD$.

Field knowledge is characterized by the following set \sysM:
\begin{itemize}[leftmargin=4em, itemsep=0pt]
\item[\textnormal{(KF)}] $F_G(\phi \ra \psi) \ra (F_G\phi \ra F_G\psi)$
\item[\textnormal{(F1)}] $F_{\{a\}}\phi\lra K_a\phi$
\item[\textnormal{(F2)}] $F_G\phi\ra F_H\phi$ with $H\subseteq G$
\item[\textnormal{(BF)}] $\phi \ra F_G \neg F_G\neg\phi$
\item[(NF)] from $\phi$ infer $F_G \phi$
\end{itemize}
While the set \sysM might at first glance seem analogous to \sysD, there exists a subtle yet crucial difference between the axioms M2 and D2 -- specifically, the positions of the groups $G$ and $H$ are swapped. This distinction necessitates the introduction of the necessitation rule NF, while within the system \KBD, the rule ``from $\phi$ infer $D_G \phi$'' is derivable. The validity of these axioms can be verified with relative ease, and it is particularly noteworthy how the altered order of $G$ and $H$ precisely mirrors the union/intersection of epistemic abilities as observed in the semantic interpretations. Similarly, $\KBM$ is represented as $\KB \oplus \sysM$.

Moving towards more complex axiomatic systems, they are constructed in a similar manner. For any given string $\mathbf{\Xi}$ comprising elements from the set $\mathbf{\{C,D,M\}}$:
\begin{quote}
\itshape
The axiomatic system $\mathbf{EL \Xi}$ consists of all axioms and rules of \KB, along with those of the sets denoted by each character in string $\mathbf{\Xi}$.
\end{quote}
To illustrate, when $\mathbf{\Xi}$ is the string ``$\mathbf{CF}$,'' $\KBCM$ stands for $\KB \oplus \sysC \oplus \sysM$, and when $\mathbf{\Xi}$ is the string ``$\mathbf{CDF}$,'' then $\KBCDM$ equates to $\KB \oplus \sysC \oplus \sysD \oplus \sysM$. For an extreme case, when $\mathbf{\Xi}$ is an empty string, $\mathbf{EL \Xi}$ simply stands for $\KB$.

We now turn our attention to validating these axiomatic systems to be sound and complete for the corresponding logics. Soundness signifies that all the theorems of an axiomatic system are valid sentences of the corresponding logic. This can be simplified to the task of verifying that all the axioms of the system are valid, and that all the rules preserve this validity. The soundness of the proposed axiomatic systems can be confirmed without much difficulty. Though we omit the proof, we state it as the following theorem. We will follow this up with the completeness results in the subsequent section.

\begin{theorem}[soundness]
Every axiomatic system introduced in this section is sound for its corresponding logic.
\qed
\end{theorem}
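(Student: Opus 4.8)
The plan is the routine soundness argument, exploiting modularity. Since each of the eight systems is $\KB$ together with some selection of the packages $\sysC$, $\sysD$, $\sysM$, and since validity is a purely semantic notion that does not depend on which language an axiom is stated in, it suffices to verify once and for all that (i) every axiom scheme listed is valid on every similarity model, and (ii) every inference rule — modus ponens, necessitation for each $K_a$, the rule $\mathrm{C2}$, and the rule $\mathrm{NF}$ — preserves validity. Assembling the relevant facts then gives soundness of all eight systems simultaneously.

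The organizing observation is that every group operator in sight is an ordinary normal box modality for a relation read off from $E$ and $C$. Put $R_a=\{(s,t)\mid C(a)\subseteq E(s,t)\}$, $R^D_G=\{(s,t)\mid \bigcup_{a\in G}C(a)\subseteq E(s,t)\}$ and $R^F_G=\{(s,t)\mid \bigcap_{a\in G}C(a)\subseteq E(s,t)\}$; then $K_a$, $D_G$, $F_G$ are the box modalities for $R_a$, $R^D_G$, $R^F_G$, while $C_G$ is the box modality for the transitive closure of $R=\bigcup_{a\in G}R_a$ — this last point is exactly the content of the clause $M,s\models C_G\psi \iff$ for all $n\geq1$, $M,s\models E^n_G\psi$, since $E_G$ is the box of $R$. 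Given this, the $\K$-part (axiom $\mathrm K$ for each $K_a$, MP, necessitation) is sound because each $R_a$ is merely a relation; the axiom $\mathrm B$ and its analogues $\mathrm{BD}$, $\mathrm{BF}$ are sound because the model condition $E(s,t)=E(t,s)$ instantly makes $R_a$, $R^D_G$ and $R^F_G$ symmetric; $\mathrm{KD}$, $\mathrm{KF}$ and $\mathrm{NF}$ are just the $\mathrm K$-axiom and necessitation for the normal boxes $D_G$, $F_G$; and $\mathrm{D1}$, $\mathrm{F1}$ hold because $\bigcup_{a\in\{a\}}C(a)=\bigcap_{a\in\{a\}}C(a)=C(a)$, so $R^D_{\{a\}}=R^F_{\{a\}}=R_a$. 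For $\mathrm{D2}$ (with $G\subseteq H$) one notes $\bigcup_{a\in G}C(a)\subseteq\bigcup_{a\in H}C(a)$, hence $R^D_H\subseteq R^D_G$, hence $D_G\phi\ra D_H\phi$; dually, for $\mathrm{F2}$ (with $H\subseteq G$) one uses $\bigcap_{a\in G}C(a)\subseteq\bigcap_{a\in H}C(a)$, hence $R^F_H\subseteq R^F_G$, hence $F_G\phi\ra F_H\phi$ — and it is precisely this reversal of inclusions under intersection that accounts for the swap of $G$ and $H$ between the two axioms.

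The only part that calls for genuine care is the common-knowledge package $\sysC$. For $\mathrm{C1}$ one uses that, with $R^+$ the transitive closure of $R=\bigcup_{a\in G}R_a$, we have $R\subseteq R^+$ and $R^+$ transitive: if $M,s\models C_G\phi$ and $s\mathrel{R_a}t$, then $t$ and every $R^+$-successor of $t$ is an $R^+$-successor of $s$, so $\phi$ holds there, giving $M,t\models\phi\wedge C_G\phi$, hence $M,s\models\bigwedge_{a\in G}K_a(\phi\wedge C_G\phi)$. For the rule $\mathrm{C2}$, assume $\phi\ra\bigwedge_{a\in G}K_a(\phi\wedge\psi)$ is valid and $M,s\models\phi$; walking along any $R$-path $s=u_0\mathrel R u_1\mathrel R\cdots\mathrel R u_m=t$ with $m\geq1$, the premise forces $M,u_i\models\phi\wedge\psi$ for every $i\geq1$ by induction on $i$, so $M,t\models\psi$ for every $R^+$-successor $t$ of $s$, i.e. $M,s\models C_G\psi$. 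This fixed-point/induction step — essentially that $R^+$-reachability is the least relation closed under one $R$-step — is the one place where the verification is not a one-line relational calculation; everything else reduces to the inclusions and symmetry properties noted above.
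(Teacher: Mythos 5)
Your proof is correct, and it is exactly the routine verification the paper alludes to: the paper explicitly omits the soundness proof ("The soundness of the proposed axiomatic systems can be confirmed without much difficulty"), so your write-up simply supplies the details — reading $K_a$, $D_G$, $F_G$ as normal boxes over relations induced by $E$ and $C$, using symmetry of $E$ for the B-type axioms, the union/intersection inclusions for D2/F2, and the transitive-closure reading of $C_G$ for C1/C2. Nothing further is needed.
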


\subsection{Completeness}\label{sec:completness}

In this section, we aim to demonstrate the completeness of all eight axiomatic systems that were introduced earlier.

It is a widely accepted fact in classical epistemic logic that the inclusion of common knowledge can cause a logic to lose its \emph{compactness}. This leads to the situation where its axiomatic system is not strongly complete, but only weakly complete (see, e.g., \cite{BdRV2001,vDvdHK2008}). This is also the case in our context. As a consequence, we will demonstrate that the four systems that do not include common knowledge are strongly complete axiomatic systems for their corresponding logics, while the other four systems that do incorporate common knowledge are only weakly complete.

The structure of this section is predicated on the various proof techniques we employ. We start with a method that reduces the satisfiability from classical epistemic logics to the logics we have proposed (Section~\ref{sec:completeness1}). However, this technique is only applicable to the system \KB; the canonical model method also works, and we provide a definition in Section~\ref{sec:completeness2} for the reference of the reader. When dealing with systems that incorporate either distributed or field knowledge, but not both (i.e., \KBD and \KBM), we utilize a path-based canonical model method (Section~\ref{sec:completeness3}). For the system that includes both distributed and field knowledge, namely \KBDM, while the path-based canonical model method still applies, a slightly more nuanced approach is required (Section~\ref{sec:completeness4}). Lastly, for the remaining systems incorporating common knowledge, we merge the finitary method (which involves constructing a closure) with the methods mentioned above (Section~\ref{sec:completeness5}).

\subsubsection{Proof by translation of satisfiability}
\label{sec:completeness1}

\begin{definition}(translation $\rho$)\label{def:trans-w}
The mapping $\cdot^\rho$ from symmetric Kripke models to similarity models is such that, for any symmetric Kripke model $N = (W,R,V)$, $N^\rho$ is the similarity model $(W,\ag \cup \{b\},E,C,\nu)$ with the same domain and:
\begin{itemize}[itemsep=2pt]
\item $E$ is such that for all $s,t \in W$, $E(s,t) = \{ a \in \ag \mid (s,t) \in R(a)\}$,
\item $b$ is a new agent which does not appear in \ag,
\item $C$ is such that for all $a \in \ag$, $C(a) = \{a\}$, and
\item $\nu$ is such that for all $s \in W$, $\nu(s) = \{ p \in \pr \mid s \in V(p)\}$.
\end{itemize}
\end{definition}
In the translated model $N^\rho$ of the above definition, the set of epistemic abilities is appointed as $\ag \cup \{b\}$. We use agents as labels of edges, which can intuitively be understood as an agent's inability to distinguish the ongoing state from current state when considering their epistemic abilities as a whole. In the subsequent lemma, we demonstrate that this translation preserves truth.

\begin{lemma}\label{lem:trans-w}\label{lem:sem-trans}
The following hold (proof in Appendix~\ref{sec:app-lem}):
\begin{enumerate}[itemsep=2pt]
\item\label{it:comp} Given a symmetric Kripke model $N$, its translation $N^\rho$ is a similarity model;
\item For any \langcd-formula $\phi$, any symmetric Kripke model $N$ and any state $s$ of $N$, $N,s \Vdash \phi$ iff $N^\rho,s \models \phi$.%
\footnote{The operator $F_G$ is undefined in classical epistemic logic, so there is no point to consider a language incorporated with this operator.}
\end{enumerate}
\end{lemma}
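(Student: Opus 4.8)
The plan is to prove the two parts of Lemma~\ref{lem:sem-trans} in order, with the bulk of the work in part~2.

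For part~1, I would verify that $N^\rho$ satisfies the two closure conditions in Definition~\ref{def:models}. Symmetry of $E$ is immediate from the symmetry of the Kripke relation $R(a)$ for each $a$: if $(s,t)\in R(a)$ then $(t,s)\in R(a)$, so $E(s,t)=E(t,s)$. For positivity, suppose $E(s,t)=\ag\cup\{b\}$; but by construction $E(s,t)\subseteq\ag$ since the new agent $b$ never labels any edge (it was added only to the ability set, not used by $E$). Hence $E(s,t)=\ag\cup\{b\}$ is impossible unless the ability set is exactly $\ag$, which it is not; so the implication ``if $E(s,t)=\ab$ then $s=t$'' holds vacuously. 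This is exactly the point of throwing in the dummy agent $b$: it guarantees no edge is ever labelled by the full ability set, so positivity can never be violated.

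For part~2, I would proceed by induction on the structure of the \langcd-formula $\phi$. The atomic case is immediate from the definition of $\nu$ in terms of $V$; the boolean cases $\neg\psi$ and $\psi\ra\chi$ are routine since $\models$ and $\Vdash$ have the same boolean clauses. The key modal cases are $K_a\psi$, $D_G\psi$, and $C_G\psi$. For $K_a\psi$: by definition $N^\rho,s\models K_a\psi$ iff for all $t$ with $C(a)\subseteq E(s,t)$ we have $N^\rho,t\models\psi$; but $C(a)=\{a\}$, so $C(a)\subseteq E(s,t)$ iff $a\in E(s,t)$ iff $(s,t)\in R(a)$, which matches the Kripke clause for $K_a\psi$ exactly, and the induction hypothesis finishes it. For $D_G\psi$: here $\bigcup_{a\in G}C(a)=\bigcup_{a\in G}\{a\}=G$, so $\bigcup_{a\in G}C(a)\subseteq E(s,t)$ iff $G\subseteq E(s,t)$ iff $(s,t)\in\bigcap_{a\in G}R(a)$, which is precisely the standard semantics of distributed knowledge in a Kripke model; again apply the induction hypothesis. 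For $C_G\psi$: both sides unfold to ``$E_G^n\psi$ holds at $s$ for every positive integer $n$,'' and since $E_G$ is itself defined via the $K_a$'s, the case reduces (by an inner induction on $n$) to the $K_a$ case already handled.

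The main obstacle — though it is more a matter of care than of difficulty — is the $D_G$ case, because it is exactly here that the redefinition of distributed knowledge could clash with the classical one. The point to check is that on the translated model the \emph{union of abilities} $\bigcup_{a\in G}C(a)$ coincides with the group $G$ itself, and that the edge function $E$ encodes $R$ so that ``$G$ is a subset of the edge label'' is the same as ``the pair lies in every $R(a)$ for $a\in G$'', i.e.\ in the intersection of the individual uncertainty relations. So the semantic clause for $D_G$ in a similarity model, when specialized to $N^\rho$, collapses to the familiar intersection-based clause — which is why the translation works despite the conceptual difference. Once this observation is in place the induction goes through mechanically.
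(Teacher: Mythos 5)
Your proof is correct and follows essentially the same route as the paper's: symmetry of $E$ from symmetry of $R(a)$, positivity holding vacuously because the dummy agent $b$ never labels an edge, and an induction whose modal cases rest on the observations that $C(a)\subseteq E(s,t)$ iff $(s,t)\in R(a)$ and that $\bigcup_{a\in G}C(a)=G$ turns the union-of-abilities clause for $D_G$ into the classical intersection-of-relations clause. No gaps.
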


The system \KB is known to be complete for symmetric Kripke models. By Lemma \ref{lem:trans-w}, such a model can be translated to a similarity model in a way that preserves truth. So we get the following.

\begin{theorem}\label{thm:completeness2}
\KB is strongly complete for \l.\qed
\end{theorem}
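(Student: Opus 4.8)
The plan is to reduce strong completeness of \KB to the well-known strong completeness of \textbf{KB} with respect to the class of symmetric Kripke models. Recall that strong completeness means: for every set $\Gamma \cup \{\phi\}$ of \lang-formulas, if $\Gamma \models \phi$ then $\Gamma \vdash_{\KB} \phi$. Equivalently, by contraposition, every \KB-consistent set of \lang-formulas is satisfiable in a similarity model. Since the axiomatic system \KB is literally the system \textbf{KB} (as remarked in the text), and \textbf{KB} is strongly complete for the class of symmetric Kripke models, every \KB-consistent set $\Gamma$ of \lang-formulas is satisfiable in some symmetric Kripke model: there is a symmetric Kripke model $N$ and a state $s$ with $N,s \Vdash \gamma$ for all $\gamma \in \Gamma$.

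The key step is then to transport this satisfying model along the translation $\cdot^\rho$ of Definition~\ref{def:trans-w}. Apply $\rho$ to $N$ to obtain $N^\rho$. By Lemma~\ref{lem:sem-trans}(\ref{it:comp}), $N^\rho$ is a genuine similarity model (i.e., it satisfies positivity and symmetry). By Lemma~\ref{lem:sem-trans}(2) — specialized from \langcd-formulas to plain \lang-formulas, which is a sublanguage — we have $N,s \Vdash \gamma$ iff $N^\rho,s \models \gamma$ for every $\gamma \in \Gamma$. Hence $N^\rho,s \models \gamma$ for all $\gamma \in \Gamma$, so $\Gamma$ is satisfiable in a similarity model. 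Combined with soundness (the preceding theorem), this yields strong completeness of \KB for \l.

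There is essentially no hard obstacle here, since both ingredients — strong completeness of \textbf{KB} for symmetric frames and the truth-preservation Lemma~\ref{lem:sem-trans} — are either classical or already established in the excerpt. The only point requiring a word of care is the direction of the reduction: one must start from an arbitrary \KB-consistent \emph{set} (not just a single consistent formula) in order to obtain \emph{strong} rather than weak completeness, and then invoke the strong completeness of \textbf{KB} for symmetric Kripke models (which holds because that logic is compact). One should also note explicitly that $N^\rho$ lives over the ability set $\ag \cup \{b\}$ rather than over $\ag$ itself; by the footnote attached to Definition~\ref{def:models}, enlarging the set of abilities does not affect the validities or the semantics of \lang-formulas, so this causes no difficulty. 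Assembling these observations gives the theorem.
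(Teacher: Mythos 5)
Your proposal is correct and follows exactly the paper's own route: it invokes the known strong completeness of $\mathbf{KB}$ for symmetric Kripke models and then transports the satisfying model along the translation $\rho$ of Definition~\ref{def:trans-w}, using both parts of Lemma~\ref{lem:trans-w}. The extra remarks on the set-vs-formula distinction for \emph{strong} completeness and on the enlarged ability set $\ag \cup \{b\}$ are accurate and only make explicit what the paper leaves implicit.
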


We note that it is possible to use the same method to achieve completeness results for the systems \KBC, \KBD and \KBCD, as long as their completeness results in classical epistemic logic exist. However, to the best of our knowledge, the completeness of these systems in Kripke semantics, while expected, has never been explicitly established.

\subsubsection{Proof by the canonical model method}
\label{sec:completeness2}

We have demonstrated the completeness of \KB using the method of translation. This method is efficient and relies on the completeness result for its classical counterpart interpreted via Kripke semantics. In other words, the translation method cannot be employed for logics whose classical counterparts have not been introduced or studied (for example, logics with field knowledge), or when a completeness result does not exist for their classical counterparts (for instance, \lc, \ld and \lcd). In this section, we introduce a direct proof of the completeness of \KB using the canonical model method and extend this to completeness proofs for other logics in later sections.

\begin{definition}\label{def:cm-el}
The \emph{canonical model for \l} is the tuple $\CM = (\CW, \CA, \CE, \CC, \CV )$ such that:
\begin{itemize}[itemsep=0pt]
\item $\CW$ is the set of all maximal \KB-consistent sets of \lang-formulas;
\item $\CA=\wp(\ag)$, the power set of all agents;
\item $\CE : \CW \times \CW \to \wp(\CA)$ is defined such that for any $\Phi,\Psi \in \CW$, $E (\Phi,\Psi) = \bigcup_{a\in \ag} E_a(\Phi,\Psi)$, where $$E_a(\Phi,\Psi) = \left\{\begin{array}{ll}
	\CC(a), & \text{if $\{\chi\mid K_a\chi\in\Phi\}\subseteq\Psi$ and $\{\chi\mid K_a\chi\in\Psi\}\subseteq\Phi$},\\
	\emptyset, & \text{otherwise;}
\end{array}\right.$$
\item $\CC : \ag \to \wp(\CA)$ is such that for any agent $a$, $\CC(a)=\{G \subseteq \ag \mid a\in G \}$;
\item $\CV: \CW \to \wp(\pr)$ is such that for any $\Phi \in \CW$, $\CV (\Phi) = \{ p \in \pr \mid p \in \Phi \}$.
\end{itemize}
\end{definition}

The discerning reader may first ascertain that the canonical model for \l is indeed a model, and then proceed to demonstrate the completeness of \KB by employing conventional techniques. The specifics of these procedures are left in Appendix~\ref{sec:app-can}.

\subsubsection{Proof using a path-based canonical model}
\label{sec:completeness3}

When addressing logics that incorporate concepts of distributed and field knowledge, the conventional canonical model technique is not suitable. To overcome this challenge, there exists a methodology for logics with distributed knowledge \cite{FHV1992}. This technique, which traces its origins to the unraveling methods of \cite{Sahlqvist1975}, starts by analogously treating distributed knowledge as forms of individual knowledge. The process involves creating a pseudo model that embodies these aspects. This pseudo model is then unraveled into a tree-like structure with paths. The resulting structure is further processed through an identification/folding step to yield the target model.

An simplified approach has been proposed in \cite{WA2020}, where the construction of a path-based, tree-like model, referred to as a \emph{standard model}, is advocated. This approach eliminates the intermediate steps of unraveling and identification/folding. We embrace this latter approach here, setting out to construct a standard model directly.

\paragraph{\underline{Completeness of \KBD}}

\begin{definition}
$\langle \Phi_0, G_1, \Phi_1, \dots, G_n, \Phi_n \rangle$ is called a \emph{canonical path} for \ld, if:
\begin{itemize}[itemsep=0pt]
\item $\Phi_0, \Phi_1, \dots, \Phi_n$ represent maximal \KBD-consistent sets of \langd-formulas,
\item $G_1, \dots, G_n$ denote groups of agents, i.e., nonempty subsets of \ag.
\end{itemize}
\end{definition}
In the context of a canonical path $s = \langle \Phi_0,G_1,\Phi_1,\dots,G_n,\Phi_n \rangle$, we denote $\Phi_n$ as $tail(s)$. This also applies to canonical paths defined later.

\begin{definition}\label{def:sm-eld}
The \emph{standard model for \ld} is the tuple $\CM = (\CW, \CA, \CE, \CC, \CV)$ such that:
\begin{itemize}[itemsep=0pt]
\item \CW is the set of all canonical paths for \ld;
\item $\CA = \wp(\ag)$;
\item $\CE : \CW \times \CW \to \wp(\CA)$ is defined such that for any $s, t \in \CW$, $\CE(s,t)  = $
$$\left\{\begin{array}{ll}
	\bigcup_{a\in G} \CC(a), & \text{if $t$ is $s$ extended with $\langle G,\Psi \rangle$},\\[-3pt]
	&\text{$\{\chi\mid D_G\chi \in tail(s)\} \subseteq \Psi$ and $\{\chi\mid D_G\chi \in \Psi\} \subseteq tail(s)$};\\
	\bigcup_{a\in G} \CC(a), & \text{if $s$ is $t$ extended with $\langle G,\Psi \rangle$},\\[-3pt]
	&\text{$\{\chi\mid D_G\chi \in tail(t)\} \subseteq \Psi$ and $\{\chi\mid D_G\chi \in \Psi\} \subseteq tail(t)$};\\
	\emptyset, & \text{otherwise;}
\end{array}\right.$$
\item $\CC : \ag \to \wp(\CA)$ is such that for any agent $a$, $\CC(a)=\{G \in \ag \mid a \in G \}$;
\item $\CV: \CW \to \wp(\pr)$ is such that for all $s \in \CW$, $\CV (s) = \{p \in \pr \mid p \in tail(s)\}$.
\end{itemize}
\end{definition}

\begin{lemma}[standardness]
The standard model for \ld is a model.
\end{lemma}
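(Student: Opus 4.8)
The plan is to verify that the standard model $\CM = (\CW, \CA, \CE, \CC, \CV)$ for \ld\ satisfies all the conditions imposed in Definition~\ref{def:models}: namely that $\CW$ is nonempty, that $\CE$ is a well-defined function into $\wp(\CA)$, that $\CC$ maps into $\wp(\CA)$, that $\CV$ maps into $\wp(\pr)$, and crucially that the two structural conditions --- Positivity and Symmetry --- hold.

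First I would dispatch the easy points. $\CW$ is nonempty because Lindenbaum's lemma guarantees at least one maximal \KBD-consistent set $\Phi_0$, and the length-zero path $\langle \Phi_0 \rangle$ is a canonical path. The co-domain checks for $\CC$ and $\CV$ are immediate from the definitions ($\CC(a) = \{G \subseteq \ag \mid a \in G\} \subseteq \wp(\ag) = \CA$, and $\CV(s) \subseteq \pr$). For well-definedness of $\CE$: one must observe that the three cases in the definition are mutually exclusive and exhaustive --- a canonical path $t$ can be ``$s$ extended by $\langle G, \Psi\rangle$'' for at most one choice of $(G,\Psi)$ since the extension is literally appended as the last two coordinates, and $t$ being an extension of $s$ and $s$ being an extension of $t$ cannot both hold unless $s = t$, in which case neither of the first two clauses applies and $\CE(s,s) = \emptyset$. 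So $\CE$ returns a unique value in $\wp(\CA)$ for every pair.

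Symmetry is built into the definition by construction: the first and second clauses are mirror images of one another (swap the roles of $s$ and $t$), and the ``otherwise'' clause gives $\emptyset$ symmetrically; so $\CE(s,t) = \CE(t,s)$ follows by a short case analysis. The step I expect to be the main obstacle --- though still not difficult --- is Positivity: one must show that if $\CE(s,t) = \CA$ then $s = t$. Here the key point is that $\CA = \wp(\ag)$ contains $\ag$ itself, so $\CA \neq \bigcup_{a\in G}\CC(a)$ for any group $G$: indeed $\bigcup_{a\in G}\CC(a) = \{H \subseteq \ag \mid H \cap G \neq \emptyset\}$, which never contains $\emptyset$ and hence is a proper subset of $\CA$. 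Therefore neither of the first two clauses can ever output all of $\CA$, and the only way to have $\CE(s,t) = \CA$ would be through a clause yielding $\CA$ --- but no clause does (the ``otherwise'' clause yields $\emptyset \neq \CA$ since $\ag \neq \emptyset$). So $\CE(s,t) = \CA$ is in fact impossible, which makes the implication ``$\CE(s,t) = \CA \Rightarrow s = t$'' vacuously true. I would present this observation cleanly, since it is the only place where the specific choice $\CA = \wp(\ag)$ (rather than, say, $\ag$ itself) is being used, and it is precisely what makes the construction go through.

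Assembling these observations --- nonemptiness of $\CW$, co-domain correctness of $\CE$, $\CC$, $\CV$, plus Symmetry and the (vacuous) Positivity --- establishes that $\CM$ meets every requirement of Definition~\ref{def:models}, completing the proof that the standard model for \ld\ is a model.
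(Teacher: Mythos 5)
Your proof is correct and follows essentially the same route as the paper's: positivity holds vacuously because $\emptyset\notin\bigcup_{a\in G}\CC(a)$ while $\emptyset\in\CA=\wp(\ag)$, so $\CE(s,t)$ can never equal $\CA$; and symmetry is immediate from the mirror-image clauses in the definition of $\CE$. You additionally spell out routine details (nonemptiness of $\CW$, well-definedness of $\CE$, co-domain checks) that the paper leaves implicit, which is fine but not a different argument.
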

\begin{proof}
Note that $\emptyset \notin \CC(a)$ for any agent $a$. This implies that for any $s,t \in \CW$, $\CE(s,t) \neq \CA$, thereby meeting the criterion of positivity. Additionally, the condition of symmetry is fulfilled as $\CE$ is a commutative function.
\end{proof}

\begin{lemma}[Truth Lemma]
Let $\CM = (\CW,\CA,\CE,\CC,\CV)$ be the standard model for \ld. For any $s \in \CW$ and \langd-formula $\phi$, $\phi \in tail(s)$ iff $\CM,s \models_{\ld} \phi$.
\end{lemma}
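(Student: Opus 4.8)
The plan is to prove the Truth Lemma by induction on the structure of the \langd-formula $\phi$. The base case ($\phi = p$) is immediate from the definition of $\CV$, and the boolean cases ($\neg$, $\ra$) follow from the fact that each $tail(s)$ is a maximal \KBD-consistent set. So the real content lies in the two modal cases, $\phi = K_a\psi$ and $\phi = D_G\psi$. Since by axiom (D1) we have $D_{\{a\}}\psi \lra K_a\psi$, and since $\CE(s,t)$ is built so that $\CC(a) \subseteq \CE(s,t)$ captures exactly the ``$D_{\{a\}}$-step'' relation, I would ideally handle $K_a$ as the special case $G = \{a\}$ of the $D_G$ argument; but to be safe I would state and prove the $D_G$ case in full generality (for all nonempty $G$) and then note the $K_a$ case reduces to it.

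For the $D_G$ case, I would argue both directions. For the ``only if'' direction (soundness of truth for $\models$): assume $D_G\psi \in tail(s)$ and let $t \in \CW$ with $\bigcup_{a \in G}\CC(a) \subseteq \CE(s,t)$; I need $\psi \in tail(t)$ (then apply IH). By the definition of $\CE$, the only way $\bigcup_{a\in G}\CC(a) \subseteq \CE(s,t)$ can hold is if $t$ is $s$ extended by some $\langle G', \Psi\rangle$ (or vice versa) with the matching $D_{G'}$-condition, and $\bigcup_{a\in G}\CC(a) \subseteq \bigcup_{a\in G'}\CC(a)$. The key set-theoretic observation, using $\CC(a) = \{G'' \subseteq \ag \mid a \in G''\}$, is that $\bigcup_{a\in G}\CC(a) \subseteq \bigcup_{a\in G'}\CC(a)$ forces $G' \subseteq G$ (since the singleton-ish witness sets force containment the right way around — this is exactly the ``swap'' phenomenon the paper flagged for $D2$ versus $F2$). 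Then from $D_G\psi \in tail(s)$ I get $D_{G'}\psi \in tail(s)$ by axiom (D2), hence $\psi \in \{\chi \mid D_{G'}\chi \in tail(s)\} \subseteq \Psi = tail(t)$ (handling the two symmetric sub-cases of the $\CE$ definition, the second using that the $D_{G'}$-condition is itself symmetric). This is the direction I expect to be the main obstacle, because it hinges on getting the set-theoretic direction of inclusion exactly right and on carefully unwinding the somewhat intricate, symmetric definition of $\CE$.

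For the ``if'' direction (the existence/completeness half): assume $D_G\psi \notin tail(s)$, so by maximal consistency $\neg D_G\psi \in tail(s)$, equivalently $\{\chi \mid D_G\chi \in tail(s)\} \cup \{\neg\psi\}$ is \KBD-consistent (using (KD) and standard modal reasoning about the closure of a maximal consistent set under the "box" $D_G$). Extend this to a maximal \KBD-consistent set $\Psi$ by Lindenbaum, and let $t = s$ extended with $\langle G, \Psi\rangle$; I must check this is a genuine canonical path, which requires $\{\chi \mid D_G\chi \in tail(s)\}\subseteq \Psi$ (true by construction) and $\{\chi \mid D_G\chi \in \Psi\} \subseteq tail(s)$ (true by axiom (BD) $\phi \ra D_G\neg D_G\neg\phi$, the standard symmetry argument: if $D_G\chi \in \Psi$ but $\chi \notin tail(s)$ then $\neg\chi \in tail(s)$, so $D_G\neg D_G\neg\neg\chi \in tail(s)$... giving $\neg D_G\chi \in \Psi$, contradiction). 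Then $\CE(s,t) \supseteq \bigcup_{a\in G}\CC(a)$ by the first clause of the definition, so $t$ is a witness with $\neg\psi \in tail(t)$, i.e. by IH $\CM,t \not\models \psi$, hence $\CM,s \not\models D_G\psi$.

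Finally, I would package the completeness of \KBD from this Truth Lemma in the usual way: a \KBD-consistent formula $\phi$ extends to some maximal consistent $\Phi_0$, which as the trivial one-node path $\langle \Phi_0\rangle$ lies in $\CW$ and satisfies $\phi$ by the lemma, so $\phi$ is satisfiable. Two minor points I would flag but not belabor: first, there is a typo in Definition~\ref{def:sm-eld} ($\CC(a) = \{G \in \ag \mid a \in G\}$ should read $G \subseteq \ag$), which I would silently use in the correct form; second, the inductive hypothesis must be applied at $tail(t)$ for a possibly different path $t$, so strictly the induction is on formula complexity uniformly over all $s \in \CW$, which is the standard setup.
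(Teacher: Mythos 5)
Your proposal follows the same route as the paper's proof: induction on $\phi$, with the $K_a$ case folded into the $D_G$ case, a Lindenbaum extension of $\{\neg\psi\}\cup\{\chi \mid D_G\chi\in tail(s)\}$ to build the witnessing path for the existence direction, and an analysis of which edges can contain $\bigcup_{a\in G}\CC(a)$ plus axiom (D2) for the other direction. The packaging of completeness and the treatment of the backward condition via (BD) are also standard and sound (the paper instead seeds the extension set with $\{\neg D_G\neg\chi\mid\chi\in tail(s)\}$, but your post hoc derivation from (BD) works equally well).

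There is, however, one concrete error in what you call the key set-theoretic observation, and it is exactly the point you flagged as the main obstacle. Since $\CC(a)=\{H\subseteq\ag\mid a\in H\}$, we have $\bigcup_{a\in G}\CC(a)=\{H\mid H\cap G\neq\emptyset\}$, so the inclusion $\bigcup_{a\in G}\CC(a)\subseteq\bigcup_{a\in G'}\CC(a)$ forces $G\subseteq G'$, not $G'\subseteq G$ as you wrote: for each $a\in G$ the singleton $\{a\}$ meets $G$, hence must meet $G'$, hence $a\in G'$. (The direction $G'\subseteq G$ is what you get for \emph{field} knowledge, where $\bigcap_{a\in G}\CC(a)=\{H\mid G\subseteq H\}$; you appear to have applied the ``swap'' to the wrong operator.) As literally stated, your next step would then fail, because (D2) is $D_G\phi\ra D_H\phi$ for $G\subseteq H$, so from $D_G\psi\in tail(s)$ and $G'\subseteq G$ you could not conclude $D_{G'}\psi\in tail(s)$. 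Fortunately your subsequent application of (D2) is the one that presupposes the \emph{correct} inclusion $G\subseteq G'$ (this is precisely how the paper argues: the edge group $H$ satisfies $H\supseteq G$, and then $D_G\psi\ra D_H\psi$ gives $\psi\in tail(t)$), so the repair is purely local: reverse the stated inclusion and everything else goes through unchanged.
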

\begin{proof}
We prove it by induction of $\phi$. The boolean cases are easy by the definition of $\nu$ and the induction hypothesis. The only interested cases are $\phi=K_a\psi$ and $\phi=D_G\psi$.

Case $\phi=K_a\psi$: very similar to the case for $D_G \psi$ given below.

Case $\phi = D_G\psi$:
Suppose $D_G\psi\notin tail(s)$, but $\CM,s\models_\ld D_G\psi$. By definition, $\bigcup_{a\in G}\CC(a)\subseteq \CE (s,t)$ implies $\CM,t\models_\ld \psi$ for any $t\in \CW$. We can extend $\{\neg\psi\}\cup\{\chi\mid D_G\chi\in tail(s) \}\cup\{\neg D_G\neg\chi \mid \chi\in tail(s) \}$ to some maximal \KBD-consistent set $\Delta^+$. Similar to the proof of Lemma~\ref{lem:truth-KB} we get $\bigcup_{a\in G}\CC(a)\subseteq \CE(s,t)$ where $t$ extends $s$ with $\langle G,\Delta^+\rangle$. By the induction hypothesis we have $\CM,t\models_\ld \neg\psi$. A contradiction!

For the opposite direction, suppose $D_G\psi\in tail(s)$, but $\CM,s\not\models_\ld D_G\psi$. Then there exists $t\in W$ such that $\CM, t \not\models_\ld \psi$ and $\bigcup_{a\in G}C(a)\subseteq E(s,t)$. This implies that there exists $H \supseteq G$, such that $\{\chi \mid D_H \chi \in tail(s)\} \subseteq tail(t)$ and $\{\chi \mid D_H\chi \in tail(t) \} \subseteq tail(s)$. Since $D_G\psi\in tail(s)$ implies $D_H\psi\in tail(s)$, we have $\psi\in tail(t)$. By the induction hypothesis we have $\CM,t\models_\ld \psi$, also leading to a contradiction.
\end{proof}
\begin{theorem}
\label{thm:completeness3}
\KBD is strongly complete for \ld.
\qed
\end{theorem}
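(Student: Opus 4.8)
The plan is to run the standard Henkin-style argument, now that all the infrastructure is in place. Soundness is already recorded (the soundness theorem), so it suffices to prove the completeness half, and for \emph{strong} completeness the convenient formulation is: every \KBD-consistent set $\Gamma$ of \langd-formulas is satisfiable in a similarity model. I would derive the theorem from this by the usual contraposition ($\Gamma \models_\ld \phi$ and $\Gamma \not\vdash_\KBD \phi$ would make $\Gamma \cup \{\neg\phi\}$ consistent but unsatisfiable).

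First I would invoke Lindenbaum's Lemma to extend the given consistent set $\Gamma$ to a maximal \KBD-consistent set $\Phi_0$. Here it is worth a sentence noting \emph{why} this yields strong rather than merely weak completeness: every axiom of $\sysD$ is a scheme and the only rules of \KBD (modus ponens, necessitation for $K_a$, and the rules inherited from \KB) are finitary, so the usual enumeration argument goes through for arbitrary, possibly infinite, $\Gamma$ — there is no infinitary rule of the kind that forces the restriction to finite premise sets in the common-knowledge systems. Next I would observe that the length-zero sequence $s := \langle \Phi_0 \rangle$ is a canonical path for \ld (the definition permits $n=0$), hence $s \in \CW$, the domain of the standard model $\CM$ for \ld. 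By the standardness lemma, $\CM$ is a genuine similarity model, and by the Truth Lemma, for every $\phi \in tail(s) = \Phi_0$ we have $\CM, s \models_\ld \phi$. Since $\Gamma \subseteq \Phi_0$, the pointed model $(\CM, s)$ satisfies $\Gamma$, which is what we wanted.

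As for the main obstacle: for this particular theorem there is essentially none — it is a routine assembly of the standardness lemma, the Truth Lemma, and Lindenbaum's Lemma. The genuinely delicate work has already been carried out inside the Truth Lemma, namely the $D_G\psi$ case: the forward direction needs an existence lemma (the analogue of the referenced Lemma~\ref{lem:truth-KB}) guaranteeing that an appropriate witness path extending $s$ by $\langle G, \Delta^+\rangle$ exists and is $\CE$-related to $s$, while the backward direction rests on the combinatorial fact that $\bigcup_{a\in G}\CC(a) \subseteq \CE(s,t)$ can only happen via an edge labelled by some $H \supseteq G$ with matching $D_H$-theories, which is where axioms D1, D2, KD and BD are actually consumed. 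The only small point I would double-check in writing up the theorem itself is that the degenerate path $\langle \Phi_0\rangle$ is indeed admitted by Definition of canonical path and that $tail(\langle\Phi_0\rangle)=\Phi_0$, so that the Truth Lemma applies verbatim at $s$.
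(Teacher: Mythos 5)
Your proposal is correct and matches the paper's intended argument: the paper leaves this theorem with \qed precisely because it is the routine assembly of Lindenbaum's Lemma, the standardness lemma, and the Truth Lemma for the standard model for \ld, with the maximal consistent extension of $\Gamma$ placed as the tail of a length-zero canonical path. Your side remarks (why strong completeness is available here but not for the common-knowledge systems, and the check that $\langle\Phi_0\rangle$ is an admissible path with $tail(\langle\Phi_0\rangle)=\Phi_0$) are accurate and consistent with the paper.
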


\paragraph{\underline{Completeness of \KBM}}

The completeness of \KBM can be demonstrated in a manner that parallels the completeness of \KBD. While we will not delve into the intricate details of the proofs, we will outline the necessary adaptations to the definitions of the standard model.

A canonical path for \lm mirrors that for \ld. The only modification required is the adjustment of the maximal consistent sets to align with the axiomatic system being considered.
When defining the standard model for \lm, we replace \CW with the set of all canonical paths for \lm, and let
$$\CE(s,t)  = \left\{\begin{array}{ll}
	\bigcap_{a\in G} \CC(a), & \text{if $t$ is $s$ extended with $\langle G,\Psi \rangle$,}\\[-3pt]
	&\text{$\{\chi\mid F_G\chi \in tail(s)\} \subseteq \Psi$ and $\{\chi\mid F_G\chi \in \Psi\} \subseteq tail(s)$},\\
	\bigcap_{a\in G} \CC(a), & \text{if $s$ is $t$ extended with $\langle G,\Psi \rangle$,}\\[-3pt]
	&\text{$\{\chi\mid F_G\chi \in tail(t)\} \subseteq \Psi$ and $\{\chi\mid F_G\chi \in \Psi\} \subseteq tail(t)$},\\
	\emptyset, & \text{otherwise.}
\end{array}\right.$$
Note that $\bigcup_{a\in G} \CC(a) = \{ H \mid H \cap G \neq \emptyset \}$, which includes all the shared epistemic abilities of those groups $H$ that intersects with $G$. Additionally, $\bigcap_{a\in G} \CC(a) = \{ H \mid G \subseteq H \}$, which represents all the supersets of $G$.

By using analogous proof structures, we can demonstrate the standardness of these models, show a Truth Lemma, and establish the completeness.

\begin{theorem}\label{thm:completeness4}
\KBM is strongly complete for \lm.
\qed
\end{theorem}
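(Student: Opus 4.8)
The plan is to mirror the completeness proof of \KBD given above, making exactly the substitutions the authors flag: canonical paths use maximal \KBM-consistent sets of \langm-formulas, the operator $D_G$ is everywhere replaced by $F_G$, and the edge label $\bigcup_{a\in G}\CC(a)$ is replaced by $\bigcap_{a\in G}\CC(a)$. Concretely, I would first restate the definition of a canonical path for \lm and the standard model $\CM = (\CW,\CA,\CE,\CC,\CV)$ for \lm, with $\CE$ as displayed in the excerpt, and then carry out three lemmas in sequence: standardness (the model really is a similarity model), a Truth Lemma ($\phi \in tail(s)$ iff $\CM,s\models_\lm\phi$), and finally the completeness theorem itself, obtained in the usual way — any \KBM-consistent formula extends to a maximal \KBM-consistent set $\Phi$, the one-node path $\langle\Phi\rangle$ lies in \CW, and by the Truth Lemma it satisfies the formula, so every valid formula is a theorem.

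For standardness I would argue exactly as in the \KBD case: since $G\neq\emptyset$ and each $a$ with $a\in H\Rightarrow H\supseteq\{a\}$, the intersection $\bigcap_{a\in G}\CC(a) = \{H\subseteq\ag \mid G\subseteq H\}$ does not contain $\emptyset$ (it cannot, as $G\neq\emptyset$ forces $\emptyset\notin\CC(a)$ for $a\in G$), hence never equals the full $\CA = \wp(\ag)$ (which does contain $\emptyset$), so positivity holds; symmetry is immediate because the defining clauses for $\CE(s,t)$ are visibly symmetric in $s,t$. I would also note the key combinatorial identity already recorded in the excerpt, $\bigcap_{a\in G}\CC(a) = \{H\mid G\subseteq H\}$, since it is what makes the $F_G$-clause in the Truth Lemma go through — analogous to how $\bigcup_{a\in G}\CC(a) = \{H\mid H\cap G\neq\emptyset\}$ drives the $D_G$ case.

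For the Truth Lemma I would do induction on $\phi$; booleans are routine, and the $K_a$ case is, as the authors say, a special case of the modal case. For $\phi = F_G\psi$: left-to-right, suppose $F_G\psi\in tail(s)$ but $\CM,s\not\models_\lm F_G\psi$, so some $t\in\CW$ has $\bigcap_{a\in G}\CC(a)\subseteq\CE(s,t)$ yet $\CM,t\not\models\psi$; by the definition of $\CE$ and the identity above, this edge must arise from an extension step labelled by some group $H$ with $\bigcap_{a\in H}\CC(a)\subseteq\bigcap_{a\in G}\CC(a)$, i.e.\ $\{K\mid H\subseteq K\}\subseteq\{K\mid G\subseteq K\}$, which forces $H\subseteq G$ — this is the place where axiom (F2), $F_G\phi\ra F_H\phi$ for $H\subseteq G$, is used (contrast the \KBD proof, which used $D_G\phi\ra D_H\phi$ for $G\subseteq H$ and consequently extracted $H\supseteq G$). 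Then $F_G\psi\in tail(s)$ gives $F_H\psi\in tail(s)$, and the matching condition on the edge gives $\psi\in tail(t)$, so by the induction hypothesis $\CM,t\models\psi$, a contradiction. Right-to-left, suppose $F_G\psi\notin tail(s)$; I would extend $\{\neg\psi\}\cup\{\chi\mid F_G\chi\in tail(s)\}\cup\{\neg F_G\neg\chi\mid \chi\in tail(s)\}$ to a maximal \KBM-consistent $\Delta^+$ — its consistency relies on the normality axiom (KF) together with (BF), $\phi\ra F_G\neg F_G\neg\phi$, exactly paralleling the role of (KD)/(BD) — let $t$ extend $s$ by $\langle G,\Delta^+\rangle$, observe $\bigcap_{a\in G}\CC(a)\subseteq\CE(s,t)$ and $\neg\psi\in tail(t)$, so by the induction hypothesis $\CM,t\not\models\psi$ and hence $\CM,s\not\models_\lm F_G\psi$.

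The main obstacle, and the one genuine asymmetry with the \KBD argument, is the direction of the group containment that the edge definition produces: in the distributed-knowledge proof an edge between $s$ and its $\langle G,\Psi\rangle$-extension is subsumed by any $\bigcup_{a\in H}\CC(a)$ with $H\subseteq G$, whereas here an edge labelled $\bigcap_{a\in G}\CC(a)$ is subsumed by $\bigcap_{a\in H}\CC(a)$ precisely when $H\subseteq G$, so the inclusion that gets fed into the modal/knowledge axiom is reversed; one has to check carefully that (F2) (with its swapped group positions relative to (D2)) is exactly what licenses the needed step, and that no appeal to a necessitation-for-$D_G$-style derived rule is smuggled in — instead the explicit rule (NF), "from $\phi$ infer $F_G\phi$", is what is available and what must be used wherever the \KBD proof silently used the derivability of "from $\phi$ infer $D_G\phi$". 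Beyond this bookkeeping the proof is a faithful transcription, so I would keep the write-up terse, stating the two model definitions and then remarking that standardness, the Truth Lemma, and completeness follow by the arguments above with $D$ replaced by $F$ and union by intersection.
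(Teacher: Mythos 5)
Your proposal follows the paper's route exactly: the paper proves this theorem by transcribing the \KBD argument with $D_G$ replaced by $F_G$ and $\bigcup_{a\in G}\CC(a)$ by $\bigcap_{a\in G}\CC(a)$, using the identity $\bigcap_{a\in G}\CC(a)=\{H\mid G\subseteq H\}$, and your treatment of standardness, the Truth Lemma, and the roles of (F2) and (NF) matches that sketch and is correct. One intermediate line should be flipped: the semantic hypothesis gives $\bigcap_{a\in G}\CC(a)\subseteq\CE(s,t)=\bigcap_{a\in H}\CC(a)$, i.e.\ $\{K\mid G\subseteq K\}\subseteq\{K\mid H\subseteq K\}$ (you wrote both inclusions in the opposite direction, from which $G\subseteq H$ would follow instead); taking $K=G$ in the corrected inclusion yields exactly the conclusion you state, namely $H\subseteq G$, so (F2) applies as you intend and the rest of the argument goes through unchanged.
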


\subsubsection{Incorporation of both distributed and field knowledge}
\label{sec:completeness4}

We now discuss the logic and its axiomatic system that incorporate both distributed and field knowledge but exclude common knowledge, namely the logic \ldm and the axiomatic system \KBDM. The construction process requires careful consideration of the intricate interaction between the two types of knowledge modalities.

\begin{definition}\label{def:ELDF-path}
$\langle \Phi_0, I_1, \Phi_1, \dots, I_n, \Phi_n \rangle$ is a \emph{canonical path for \ldm}, if:
\begin{itemize}
\item $\Phi_0,\Phi_1,\dots,\Phi_n$ are maximal \KBDM-consistent sets of \langdm-formulas;
\item $I_1, \dots, I_n $ are of the form $(G, d)$ or $(G, m)$, with $G$ denoting a group, and ``$d$'' and ``$m$'' being just two distinct characters.
\end{itemize}
\end{definition}

\begin{definition}
\label{def:sm-ldm}
The \emph{standard model for \ldm} is a tuple $\CM = (\CW, \CA, \CE, \CC, \CV)$ where \CA, \CC and \CV are defined just as in the standard model for \ld (Definition~\ref{def:sm-eld}), and:
\begin{itemize}
\item \CW is the set of all canonical paths for \ldm;
\item $\CE : \CW \times \CW \to \wp(\CA)$ is such that for any $s, t \in \CW$, $\CE(s,t)  = $
$$\left\{\begin{array}{ll}
	\bigcup_{a\in G} \CC(a), & \text{if $t$ extends $s$ with $\langle (G,d), \Psi \rangle$,}\\&\text{$\{\chi\mid D_G\chi \in tail(s)\} \subseteq \Psi$ and $\{\chi\mid D_G\chi \in \Psi \} \subseteq tail(s)$},\\
	\bigcup_{a\in G} \CC(a), & \text{if $s$ extends $t$ with $\langle (G,d), \Psi \rangle$,}\\&\text{$\{\chi\mid D_G\chi \in tail(t)\} \subseteq \Psi$ and $\{\chi\mid D_G\chi \in \Psi \} \subseteq tail(t)$},\\
	\bigcap_{a\in G} \CC(a), & \text{if $t$ extends $s$ with $\langle (G,m), \Psi \rangle$,}\\&\text{$\{\chi\mid F_G\chi \in tail(s)\} \subseteq \Psi$ and $\{\chi\mid F_G\chi \in \Psi \} \subseteq tail(s)$},\\
	\bigcap_{a\in G} \CC(a), & \text{if $s$ extends $t$ with $\langle (G,m), \Psi \rangle$,}\\&\text{$\{\chi\mid F_G\chi \in tail(t)\} \subseteq \Psi$ and $\{\chi\mid F_G\chi \in \Psi \} \subseteq tail(t)$},\\
	\emptyset, & \text{otherwise.}
\end{array}\right.$$
\end{itemize}
\end{definition}

The standardness and the Truth Lemma can be achieved in a similar way, and we leave a proof of the Truth Lemma in Appendix~\ref{sec:app-truth} for the careful reader.

\begin{theorem}
\label{thm:completeness5}
\KBDM is strongly complete for \ldm.
\qed
\end{theorem}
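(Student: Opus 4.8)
The plan is to follow the same blueprint used for Theorems~\ref{thm:completeness3} and \ref{thm:completeness4}, namely the path-based standard model construction, but with the combined edge function of Definition~\ref{def:sm-ldm}. Soundness is already granted by the soundness theorem, so the whole task is the completeness direction: given a \KBDM-consistent \langdm-formula $\phi_0$, extend $\{\phi_0\}$ to a maximal \KBDM-consistent set $\Phi_0$, take the one-element canonical path $\langle \Phi_0 \rangle$ as the evaluation point in the standard model $\CM$ for \ldm, and conclude $\CM, \langle\Phi_0\rangle \models_{\ldm} \phi_0$ from the Truth Lemma. So the real work is (i) standardness of $\CM$ and (ii) the Truth Lemma, both of which the excerpt says are ``achieved in a similar way'' with a proof deferred to the appendix.

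For standardness: since $\CC(a) = \{G\subseteq\ag \mid a\in G\}$ never contains $\emptyset$, and $\ag$ itself lies in $\bigcap_{a\in G}\CC(a)$ for any group $G$ but is never forced into $\bigcup_{a\in G}\CC(a)\cup\bigcap_{a\in G}\CC(a)$ unless... — more carefully, one checks $\bigcup_{a\in G}\CC(a) = \{H \mid H\cap G\neq\emptyset\}$ and $\bigcap_{a\in G}\CC(a) = \{H \mid G\subseteq H\}$; neither equals all of $\CA=\wp(\ag)$ because $\emptyset\notin$ either set. Hence $\CE(s,t)\neq\CA$ always, giving positivity; symmetry is immediate because the four non-trivial clauses of $\CE$ come in matched $s$-extends-$t$ / $t$-extends-$s$ pairs. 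For the Truth Lemma I would induct on $\phi$, with the boolean cases routine and three modal cases $K_a\psi$, $D_G\psi$, $F_G\psi$. Each modal case has the familiar two halves. For the ``$\Box\psi\in tail(s)$, want truth'' half: if $\CE(s,t)$ is large enough (e.g. $\bigcup_{a\in G}\CC(a)\subseteq\CE(s,t)$ for the $D_G$ case), unpack which clause of the definition produced that edge — it must be one of the $d$-clauses for some $H$ with, crucially, $\bigcup_{a\in H}\CC(a)\supseteq\bigcup_{a\in G}\CC(a)$, i.e. (reading off the descriptions) $H$ intersects every set that $G$ does, forcing $G\subseteq H$ up to the relevant closure — then $D_H\psi\in tail(s)$ follows from $D_G\psi\in tail(s)$ by axiom D2, and the defining condition on the edge pushes $\psi$ into $tail(t)$; apply the induction hypothesis. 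The converse half constructs a witness path: extend $\{\neg\psi\}\cup\{\chi\mid D_G\chi\in tail(s)\}\cup\{\neg D_G\neg\chi \mid \chi\in tail(s)\}$ to a maximal consistent $\Delta^+$ (consistency uses KD, D2 and BD exactly as in the \ld proof), append $\langle(G,d),\Delta^+\rangle$ to $s$ to get $t\in\CW$, observe $\CE(s,t)=\bigcup_{a\in G}\CC(a)$, and invoke the induction hypothesis. The $F_G$ case is the mirror image using KF, F2, BF, NF and the $m$-clauses, with $\bigcap$ in place of $\bigcup$ and the group-inclusion direction reversed ($H\subseteq G$).

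The main obstacle — and the reason this theorem merits separate treatment in Section~\ref{sec:completeness4} rather than being folded into Section~\ref{sec:completeness3} — is the interaction of the two modalities inside a single model, specifically making sure that a $d$-edge can never be mistaken for an $m$-edge (or vice versa) when reading off the ``which clause fired'' step in the Truth Lemma. In the pure \ld or \lm model an edge label $\bigcup_{a\in G}\CC(a)$ (resp.\ $\bigcap_{a\in G}\CC(a)$) determines $G$ up to the closure that matters; here one must check that the set $\{H\mid H\cap G\neq\emptyset\}$ arising from a $d$-step is never of the form $\{H\mid G'\subseteq H\}$ arising from an $m$-step, and conversely, so that the case analysis on the edge is genuinely exhaustive and disjoint. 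This is a finite combinatorial fact about $\wp(\ag)$ (e.g. upward-closed-and-generated-by-a-single-set versus ``meets a fixed nonempty set''), and once it is in place the two halves of each modal case go through verbatim as in the earlier proofs; I would isolate it as a small observation before the induction. Everything else — consistency of the extended sets, the path-extension bookkeeping, the closure properties of maximal consistent sets — is identical in form to the \ld argument already spelled out above.
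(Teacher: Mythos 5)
Your overall architecture matches the paper's: the same path-based standard model of Definition~\ref{def:sm-ldm}, standardness via $\emptyset\notin\CC(a)$, a Truth Lemma by induction on $\phi$, and the usual Lindenbaum argument on top. The standardness check and the existence (``converse'') halves of the modal cases are fine. The problem lies in your diagnosis of where the two modalities interact. You propose to prove, as a preliminary combinatorial observation, that a $d$-label $\bigcup_{a\in G}\CC(a)=\{H\mid H\cap G\neq\emptyset\}$ is never of the form of an $m$-label $\bigcap_{a\in G'}\CC(a)=\{H\mid G'\subseteq H\}$, so that the case analysis on which clause produced an edge is ``exhaustive and disjoint,'' and you then treat the $F_G$ case as a pure mirror image of the $D_G$ case that only ever meets $m$-edges. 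That observation is already false for singletons ($\{H\mid H\cap\{a\}\neq\emptyset\}=\{H\mid \{a\}\subseteq H\}$), but the deeper issue is that disjointness of labels is not what the Truth Lemma needs: what matters is which edges satisfy the semantic accessibility condition, and these overlap across clause types. Indeed $\bigcap_{a\in G}\CC(a)\subseteq\bigcup_{a\in J}\CC(a)$ holds exactly when $G\cap J\neq\emptyset$, so when you evaluate $F_G\psi$ at $s$, every $d$-successor of $s$ for any group $J$ meeting $G$ is $F_G$-accessible. Your $F_G$ case, which inspects only $m$-edges with $H\subseteq G$, silently assumes $\psi$ holds at these $d$-successors and therefore has a hole in the ``$F_G\psi\in tail(s)$ implies truth'' direction.

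The paper's proof (Lemma~\ref{lem:truthdm}, case $F_G\psi$) closes exactly this hole: for a $d$-edge with group $J$ such that $G\cap J\neq\emptyset$, pick $a\in G\cap J$ and chain $F_G\psi\ra F_{\{a\}}\psi\ra K_a\psi\ra D_{\{a\}}\psi\ra D_J\psi$ using F2, F1, D1 and D2, so that the defining condition $\{\chi\mid D_J\chi\in tail(s)\}\subseteq tail(t)$ forces $\psi\in tail(t)$. (A symmetric but milder point arises for $K_a\psi$, and for $D_G\psi$ when $G$ is a singleton, where $m$-edges can also be accessible; there F1 and D1 do the work.) This inter-modal derivation is the genuinely new content of Section~\ref{sec:completeness4}, and your proposal as written does not supply it.
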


\subsubsection{Proof by a finitary standard model}
\label{sec:completeness5}

We now delineate the extension of the completeness results to the rest of the logics with common knowledge, deploying a finitary method for this purpose. We can only achieve weak completeness due to the non-compact nature of the common knowledge modality. A difficulty, except for \lc, is that we also need to address the modality for distributed or field knowledge.

In this section, we focus on providing the completeness proofs for \KBCDM. By making simple adaptations, one can obtain the completeness of the axiomatic systems for their sublogics with common knowledge. We adapt the definition of the \emph{closure} of a formula presented in \cite{WA2020}, to cater to formulas with modalities $D_G$ and/or $F_G$.

\begin{definition}\label{def:cl}
For an \langcdm-formula $\phi$, we define $cl(\phi)$ as the minimal set satisfying the subsequent conditions:
\begin{enumerate}[itemsep=2pt]
\item\label{it:cl-id} $\phi\in cl(\phi)$;
\item\label{it:cl-sub} if $\psi$ is in $cl(\phi)$, so are all subformulas of $\psi$;
\item\label{it:cl-neg} $\psi\in cl(\phi)$ implies ${\sim}\psi\in cl(\phi)$, where $\NEG\psi=\neg\psi$ if $\psi$ is not a negation and $\NEG\psi=\chi$ if $\psi=\neg\chi$;
\item\label{it:cl-1} $K_a\psi\in cl(\phi)$ implies $D_{\{a\}}\psi, F_{\{a\}}\psi\in cl(\phi)$;
\item\label{it:cl-d1} $D_{\{a\}}\psi\in cl(\phi)$ implies $K_a\psi\in cl(\phi)$;
\item\label{it:cl-d2} For groups $G$ and $H$, if $H$ appears in $\phi$, then $D_G\psi\in cl(\phi)$ implies $D_H\psi\in cl(\phi)$;
\item\label{it:cl-c1} $C_G\psi\in cl(\phi)$ implies $\{ K_a\psi, K_a C_G\psi \mid a \in G \} \subseteq cl(\phi)$;
\item\label{it:cl-m1} $F_G\psi\in cl(\phi)$ implies $\{ K_a\psi \mid a \in G\} \subseteq cl(\phi)$;
\item\label{it:cl-m2} For groups $G$ and $H$, if $H$ appears in $\phi$, then $F_G\psi\in cl(\phi)$ implies $F_H\psi\in cl(\phi)$.
\end{enumerate}
\end{definition}
Considering that the initial two clauses exclusively introduce subformulas of $\phi$, the subsequent three clauses incorporate formulas in a constrained manner, and given that there are a finite number of groups mentioned in $\phi$, with each group containing only a finite number of agents, we can readily confirm that $cl(\phi)$ for any given formula $\phi$.

Subsequently, we introduce the concept of a \emph{maximal consistent set of formulas within a closure}. For a comprehensive definition, which is naturally contingent on the specific axiomatic system under consideration, we refer to established literature, for example, \cite{vDvdHK2008}.

A \emph{canonical path for \lcdm in $cl(\phi)$} is defined similarly to that for \ldm (Def.~\ref{def:ELDF-path}). Given an \langcdm-formula $\phi$, we can construct the \emph{standard model for \lcdm with respect to $cl(\phi)$} in a manner that closely mirrors the construction of the standard model for \ldm (as per Def.~\ref{def:sm-ldm}). The primary differences lie in bounding the canonical paths by the closure and adjusting the logics accordingly. More specifically, we need to: (1) replace all occurrences of ``\langdm'' with ``\langcdm,'' and ``\ldm'' with ``\lcdm''; (2) within the definition of \CW, replace ``canonical paths for \ldm'' with ``canonical paths for \lcdm in $cl(\phi)$.'' Moreover, it is easy to confirm that the standard model for \lcdm (in any closure of a given formula) is a model.

\begin{lemma}[Truth Lemma]\label{lem:truthcdm}
Given an \langcdm-formula $\theta$, and let $\CM = (\CW, \CA, \CE, \CC, \CV)$ be the standard model for \lcdm with respect to $cl(\theta)$, for any $s \in \CW$ and $\phi \in cl(\theta)$, we have $\phi \in tail(s)$ iff $\CM,s \models_{\lcdm} \phi$.
\end{lemma}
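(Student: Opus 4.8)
The plan is to prove the Truth Lemma for the finitary standard model $\CM$ for $\lcdm$ with respect to $cl(\theta)$ by induction on the structure of $\phi \in cl(\theta)$. The atomic and boolean cases are immediate from the definition of $\CV$ and the closure conditions (clauses \ref{it:cl-sub} and \ref{it:cl-neg} guarantee that the relevant subformulas and negations stay within $cl(\theta)$, so the induction hypothesis applies). The interesting cases are the four modalities $K_a\psi$, $D_G\psi$, $F_G\psi$, and $C_G\psi$, and of these the first three can be handled by essentially the same path-based argument already used in Sections~\ref{sec:completeness3}--\ref{sec:completeness4}, now carried out inside the closure.

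For the case $\phi = D_G\psi$: if $D_G\psi \in tail(s)$, then whenever $\bigcup_{a\in G}\CC(a) \subseteq \CE(s,t)$ we must have some $H \supseteq G$ witnessing the edge (via the structure of $\CC$ and the remark that $\bigcup_{a\in G}\CC(a) = \{H \mid H\cap G \neq \emptyset\}$ — wait, rather for $D$ the relevant set is $\bigcup_{a\in G}\CC(a)$ which forces a superset relation; one reads off $H \supseteq G$), and then monotonicity (D2) together with $D_G\psi \in tail(s)$ gives $D_H\psi \in tail(s)$, hence $\psi \in tail(t)$; here closure clause \ref{it:cl-d2} is what keeps $D_H\psi$ inside $cl(\theta)$, and clause \ref{it:cl-sub} keeps $\psi$ inside, so the induction hypothesis yields $\CM,t \models \psi$. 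For the converse, if $D_G\psi \notin tail(s)$ one extends $\{\neg\psi\} \cup \{\chi \mid D_G\chi \in tail(s)\} \cup \{\neg D_G\neg\chi \mid \chi \in tail(s)\}$ to a maximal consistent set in $cl(\theta)$ and appends $\langle(G,d),\Delta^+\rangle$ to $s$, obtaining a successor $t$ with $\bigcup_{a\in G}\CC(a) \subseteq \CE(s,t)$ and $\neg\psi \in tail(t)$, whence $\CM,s \not\models D_G\psi$. The case $\phi = F_G\psi$ is dual, using $\bigcap$ in place of $\bigcup$, axioms F2/KF/BF/NF, the $(G,m)$-labels, and closure clauses \ref{it:cl-m1}--\ref{it:cl-m2}; the case $\phi = K_a\psi$ is the special case $G = \{a\}$ (tying back via (D1)/(F1) and closure clauses \ref{it:cl-1}--\ref{it:cl-d1}).

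The genuinely new and hardest case is $\phi = C_G\psi$, because common knowledge is not captured by a single successor step but by reachability along arbitrarily long paths, and the model is now finite. For the direction $C_G\psi \in tail(s) \Rightarrow \CM,s \models C_G\psi$: using (C1) one shows that if $C_G\psi \in tail(s)$ and $t$ is reached from $s$ by one $K_a$-step for some $a \in G$, then both $\psi \in tail(t)$ and $C_G\psi \in tail(t)$ (closure clause \ref{it:cl-c1} keeps $K_a\psi$ and $K_aC_G\psi$ in $cl(\theta)$, so these formulas are actually decided by the maximal consistent sets); iterating gives $\psi$ at every state $E_G$-reachable from $s$, i.e. $\CM,s \models C_G\psi$. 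For the converse I would argue contrapositively: if $C_G\psi \notin tail(s)$, let $X$ be the set of states $u$ with $\neg C_G\psi \in tail(u)$ reachable from $s$ along $E_G$-edges (including $s$ itself), and let $\sigma$ be the disjunction of the characteristic formulas of the (finitely many) closure-MCSs occurring as tails of members of $X$; one checks $\vdash \sigma \to \bigwedge_{a\in G}K_a(\sigma \wedge \neg\psi)$-style implications — more precisely that from any such tail one cannot $K_a$-escape $X$ without... hmm, the standard trick is: show $\vdash \chi_u \to \bigvee\{K_a\text{-preimages}\}$ fails, so instead show there is some reachable $u \in X$ with $\psi \notin tail(u)$, using (C2): if $\psi$ held throughout $X$ one derives $tail(s) \vdash C_G\psi$, contradiction. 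This requires the finiteness of $cl(\theta)$ (hence of the set of tails) in an essential way and the careful bookkeeping that path-successors in the standard model realize every $K_a$-successor demanded by the MCSs — this "existence of the right successor path" lemma, analogous to the one invoked for $D_G$, is the main obstacle and the place where one must be most careful that bounding by $cl(\theta)$ does not destroy the needed witnesses.
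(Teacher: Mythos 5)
Your overall plan coincides with the paper's: induction on $\phi$, with the boolean and $K_a/D_G/F_G$ cases handled by the path-based successor construction restricted to $cl(\theta)$, and the $C_G$ case split into a propagation argument (left-to-right) and an application of the induction rule (C2) to a disjunction of characteristic formulas (right-to-left). The left-to-right direction for $C_G$ is essentially the paper's argument: each edge realizing $\CC(a)\subseteq\CE(s_{i-1},s_i)$ yields $\{\chi\mid K_a\chi\in tail(s_{i-1})\}\subseteq tail(s_i)$ (via the $D_H$- or $F$-labelled extensions together with axioms D1/F1/D2 and closure clauses \ref{it:cl-1}, \ref{it:cl-d1}, \ref{it:cl-c1}), so (C1) pushes $\psi\wedge C_G\psi$ along any $E_G$-path.

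The gap is in the direction $C_G\psi\notin tail(s)\Rightarrow\CM,s\not\models_{\lcdm} C_G\psi$, which you yourself flag as the main obstacle but do not resolve. First, your set $X$ is wrong: you take the reachable states $u$ with $\neg C_G\psi\in tail(u)$, but that set is not closed under syntactic $K_a$-successors (a $K_a$-successor of a tail containing $\neg C_G\psi$ need not contain $\neg C_G\psi$), so you cannot derive $\vdash\sigma\ra K_a\sigma$ for its disjunction $\sigma$ and (C2) does not apply. The argument must instead run by contradiction from the semantic hypothesis $\CM,s\models_{\lcdm}C_G\psi$: take $\mcS$ to be the set of \emph{all} states $E_G$-reachable from $s$ (together with $s$), put $\delta=\bigvee_{t\in\mcS}\widehat{tail(t)}$, and prove the two claims $\vdash\delta\ra K_a\delta$ and $\vdash\delta\ra K_a\psi$ for each $a\in G$; then (C2) gives $\vdash\delta\ra C_G\psi$ and hence $C_G\psi\in tail(s)$. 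Second, those two claims are exactly the ``existence of the right successor path'' lemma you defer, and they are the substance of the case: for $\vdash\delta\ra K_a\delta$ one shows that if $\widehat{tail(t_0)}\wedge\neg K_a\neg\widehat{\Psi}$ is consistent for a closure-MCS $\Psi$ not among the tails of $\mcS$, then $\{\chi\mid D_{\{a\}}\chi\in tail(t_0)\}\subseteq\Psi$ and conversely, so extending $t_0$ with $\langle(\{a\},d),\Psi\rangle$ produces a state of $\mcS$ with tail $\Psi$ --- a contradiction; for $\vdash\delta\ra K_a\psi$ one extends $\{\NEG\psi\}\cup\{\chi\mid D_{\{a\}}\chi\in tail(t_0)\}\cup\{\neg D_{\{a\}}\NEG\chi\in cl(\theta)\mid\chi\in tail(t_0)\}$ to a closure-MCS and realizes it as a path extension, contradicting the semantic hypothesis via the induction hypothesis. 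Without these derivations (which also rely on closure clauses \ref{it:cl-1} and \ref{it:cl-d1} to shuttle between $K_a$ and $D_{\{a\}}$ inside $cl(\theta)$), the right-to-left direction of the $C_G$ case is not established.
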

\begin{proof}
We show the lemma by induction on $\phi$. We omit the straightforward cases (partially found in Appendix~\ref{sec:app-truth2}) and focus on the cases concerning common knowledge.

Suppose $C_G\psi\in tail(s)$, but $\CM,s\not\models_\lcdm C_G\psi$, then there are $s_i\in \CW$, $a_i\in G$, $0\leq i\leq n$ for some $n\in\mbN$ such that: $s_0=s$, $\CM,s_n\not\models\psi$ and $\CC(a_i)\subseteq \CE (s_{i-1},s_i)$ for $1\leq i\leq n$. Since $\CC(a_i)\subseteq \CE (s_{i-1},s_i)$, we have either $\{\chi\mid D_H\chi\in tail(s_{i-1})\}\subseteq tail(s_i)$ for some $H$ containing $a_i$ or $\{\chi\mid F_{\{a_i\}}\chi\in tail(s_{i-1})\}\subseteq tail(s_i)$. In both cases, $\{\chi\mid K_{a_i}\chi\in tail(s_{i-1})\}\subseteq tail(s_i)$. Since $C_G\psi\in tail(s_i)$ implies $K_{a_i}C_G\psi,K_{a_i}\psi\in tail(s_i)$, we can infer that $C_G\psi,\psi\in tail(s_n)$. By the induction hypothesis, $\CM,s_n\models_\lcdm \psi$, leading to a contradiction.

Suppose $C_G\psi\not\in tail(s)$, but $\CM,s\models_\lcdm C_G\psi$. Thus for any $s_i\in \CW$ and $a_i\in G$ where $0\leq i\leq n$, such that: $s_0=s$ and $\CC(a_i)\subseteq\CE(s_{i-1},s_i)$, we have $M,s_n\models_\lcdm \psi$ and $M,s_n\models_\lcdm C_G\psi$. Collect all such possible $s_n$ above and $s$ into the set $\mcS$; similarly collect all the $tail(s_n)$ and $tail(s)$ into the set $\Theta$. We define $\delta=\bigvee_{t\in\mcS}\widehat{tail(t)}$, where for any $t \in \CW$, $\widehat{tail(t)} = \bigwedge tail(t)$. (In general, for any finite set $\Psi$ of formulas, we write $\widehat{\Psi}$ for $\bigwedge \Psi$.)
We claim that $\vdash_{\lcdm}\delta\ra K_a\delta$ and $\vdash_{\lcdm}\delta\ra K_a\psi$ for any $a\in G$. By this claim and (C2) we have $\vdash_{\lcdm}\delta\ra C_G \psi$, and then by $\widehat{tail(s)}\ra\delta$ we have $\widehat{tail(s)} \ra C_G\psi$. In this way $C_G\psi\in tail(s)$, which leads to a contradiction. As for the proof of the claim:

(1) Suppose $\nvdash_{\lcdm}\delta\ra K_a\delta$, then $\delta\wedge \neg K_a\delta$ is consistent. Then there exists $t_0\in\mcS$ such that $\widehat{tail(t_0)}\wedge\neg K_a\delta$ is consistent. Notice that $\vdash_{\lcdm}\bigvee_{t\in \CW}\widehat{tail(t)}$, hence we have a consistent set $\widehat{tail(t_0)}\wedge\neg K_a\neg \widehat{tail(t_1)}$ for some $t_1\in \CW \setminus \mcS$; for otherwise we have $\CW\setminus\mcS=\emptyset$, hence $\vdash_{\lcdm}\delta$, which leads to $\vdash_{\lcdm}K_a\delta$, contradicting with $\nvdash_{\lcdm}\delta\ra K_a\delta$. Thus we have $\{\chi\mid K_a\chi\in tail(t_0)\}\subseteq tail(t_1)$, which implies $\{\chi\mid D_{\{a\}}\chi\in tail(t_0)\}\subseteq tail(t_1)$. Moreover, for any $\chi$, if $D_{\{a\}}\chi\in tail(t_1)$, then $\widehat{tail(t_0)}\wedge\neg K_a\neg D_{\{a\}}\chi$ is consistent, hence $\widehat{tail(t_0)}\wedge\chi$ is also consistent, thus $\chi\in tail(t_0)$. So we have $\{\chi\mid D_{\{a\}}\chi\in tail(t_1)\}\subseteq tail(t_0)$. Now we let $t_2$ be $t_0$ extended with $\langle(\{a\},d),tail(t_1)\rangle$, we have $\CC(a)\subseteq\CE(t_0,t_2)$. Hence $t_2\in\mcS$ but $tail(t_2)=tail(t_1)\notin\Theta$. A contradiction!

(2) Suppose $\nvdash_{\lcdm}\delta\ra K_a\psi$, then $\delta\wedge \neg K_a\psi$ is consistent. So there exists $t_0\in\mcS$ such that $\widehat{tail(t_0)}\wedge\neg K_a\psi$ is consistent. Thus $\{\NEG\psi\}\cup\{\chi\mid D_{\{a\}}\chi\in tail(t_0)\}\cup\{\neg D_{\{a\}}\NEG\chi \in cl(\theta)\mid \chi\in tail(t_0)\}$ is consistent. Hence it can be extended to some max consistent subset $\Delta^+$ in $cl(\theta)$. Let $t_1$ be $t_0$ extended with $\langle(\{a\},d),\Delta^+\rangle$, we have $\{\chi\mid D_{\{a\}}\chi\in tail(t_0)\}\subseteq tail(t_1)$. Moreover, if $D_{\{a\}}\chi\in tail(t_1) = \Delta^+$, then $\neg D_{\{a\}}\chi\notin\Delta^+$, thus $\neg D_{\{a\}} \NEG \neg \chi = \neg D_{\{a\}}\chi \in cl(\theta)$ and $\neg\chi \notin tail(t_0)$, which implies $\chi\in tail(t_0)$. So we also have $\{\chi\mid D_{\{a\}}\chi\in tail(t_1)\}\subseteq tail(t_0)$. Thus we have $\CC(a)\subseteq\CE(t_0,t_1)$. Hence $t_1\in\mcS$ and then $\CM,t_1\models_\lcdm\psi$, which contradicts with $\NEG\psi\in tail(t_1)$ by the induction hypothesis.
\end{proof}

\begin{theorem}
\KBCDM is weakly complete for \lcdm.
\qed
\end{theorem}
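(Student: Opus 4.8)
The plan is to derive weak completeness of \KBCDM for \lcdm from the Truth Lemma (Lemma~\ref{lem:truthcdm}) together with soundness, following the standard pattern for finitary completeness proofs. First I would argue that it suffices to show that every \KBCDM-consistent \langcdm-formula $\phi$ is satisfiable in some similarity model; weak completeness then follows by contraposition in the usual way (if $\not\vdash_{\lcdm}\phi$ then $\neg\phi$ is consistent, hence satisfiable, hence $\phi$ is not valid). So fix a consistent formula $\theta$ and work inside $cl(\theta)$, which by the remark following Definition~\ref{def:cl} is finite.

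Next I would build the standard model $\CM = (\CW, \CA, \CE, \CC, \CV)$ for \lcdm with respect to $cl(\theta)$, as described in the paragraph preceding Lemma~\ref{lem:truthcdm} (canonical paths for \lcdm in $cl(\theta)$, with $\CE$ defined by the same case split as in Definition~\ref{def:sm-ldm} but restricted to maximal consistent subsets of $cl(\theta)$). I would first note, as already observed in the excerpt, that this is indeed a similarity model: positivity holds because $\emptyset \notin \CC(a)$ for every agent $a$, so $\CE(s,t)$ is never all of $\CA$, and symmetry holds because the defining case split for $\CE(s,t)$ is manifestly commutative in $s$ and $t$. Then, since $\theta$ is \KBCDM-consistent, by a Lindenbaum-style argument it extends to a maximal consistent subset $\Phi_0$ of $cl(\theta)$; the length-zero path $s_0 = \langle \Phi_0 \rangle$ lies in $\CW$ and has $tail(s_0) = \Phi_0 \ni \theta$. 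Applying Lemma~\ref{lem:truthcdm} with this $s_0$ and the formula $\theta \in cl(\theta)$ gives $\CM, s_0 \models_{\lcdm} \theta$, so $\theta$ is satisfiable.

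Combining this with the soundness theorem (Theorem, soundness) yields: $\vdash_{\lcdm} \phi$ iff $\phi$ is valid over similarity models, for all \langcdm-formulas $\phi$ — the left-to-right direction is soundness, and the right-to-left direction is the contrapositive of what we just showed. That is exactly weak completeness of \KBCDM for \lcdm. The main obstacle in this argument is entirely discharged already by Lemma~\ref{lem:truthcdm}; beyond that, the only point needing a little care is the Lindenbaum argument producing $\Phi_0$ as a maximal consistent \emph{subset of the finite set} $cl(\theta)$ (rather than a maximal consistent set of all formulas), and checking that a single point $\langle \Phi_0 \rangle$ with no extending edges is a legitimate canonical path — both of which are routine given the setup. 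It is worth remarking why only \emph{weak} completeness is available: the closure $cl(\theta)$ is tailored to a single formula $\theta$, and because the $C_G$ modality is non-compact (as noted at the start of Section~\ref{sec:completness}), there is no way to carry out a single infinitary construction handling an arbitrary consistent \emph{set} of formulas, so the method inherently yields satisfiability of consistent formulas only.
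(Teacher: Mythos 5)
Your proposal is correct and follows exactly the route the paper intends: the theorem is stated with an immediate \qed because it is the standard consequence of the Truth Lemma (Lemma~\ref{lem:truthcdm}) applied to the finitary standard model built over $cl(\theta)$, and you have simply made explicit the routine steps (Lindenbaum within the finite closure, the length-zero canonical path, and the contrapositive argument) that the paper leaves to the reader. No gaps.
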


Building on the discussion earlier in this section, the completeness proofs for  \KBC, \KBCD and \KBCM can be derived from the proofs for  \KBCDM. For brevity, however, the intricate details of these adaptations are not included in this paper.

\section{Conclusion}

We examined epistemic logics with all combinations of common, distributed and field knowledge, interpreted in scenarios that consider agents' epistemic abilities, such as professions. We adopted a type of similarity model that extends from a Kripke model by adding weights to edges, and studied the axiomatization of the resulting logics.

The framework of our logics presents diverse possibilities for characterizing the concept of \emph{knowability}. Apart from interpreting knowability as known after a single announcement \cite{BBvDHHdL2008}, a group announcement \cite{ABDS2010}, or after a group resolves their knowledge \cite{AW2017rdk}, it is now conceivable to perceive knowability as known after an agent acquires certain skills (epistemic abilities) from some source or from a given group. Our framework also enables us to easily characterize \emph{forgetability} or \emph{degeneration} through changes in epistemic abilities, a process that is not as straightforward in classical epistemic logic.

Looking ahead, we aim to explore more sophisticated conditions on the similarity relation, such as those introduced in \cite{CMZ2009}. It would also be of interest to compare our framework with existing ones that use the same style of models, as presented in \cite{NT2015,DLW2021}.

\bibliographystyle{aiml}
\bibliography{main}

\begin{thebibliography}{10}
\expandafter\ifx\csname url\endcsname\relax
  \def\url#1{\texttt{#1}}\fi
\expandafter\ifx\csname urlprefix\endcsname\relax\def\urlprefix{URL }\fi
\newcommand{\enquote}[1]{``#1''}

\bibitem{Aggarwal2015}
Aggarwal, C.~C., \enquote{Data Mining: The Textbook,} Springer, 2015.

\bibitem{ABDS2010}
{\AA}gotnes, T., P.~Balbiani, H.~van Ditmarsch and P.~Seban, \emph{Group
  announcement logic}, Journal of Applied Logic \textbf{8} (2010), pp.~62--81.

\bibitem{AW2017rdk}
{\AA}gotnes, T. and Y.~N. W\'{a}ng, \emph{Resolving distributed knowledge},
  Artificial Intelligence \textbf{252} (2017), pp.~1--21.

\bibitem{BBvDHHdL2008}
Balbiani, P., A.~Baltag, H.~van Ditmarsch, A.~Herzig, T.~Hoshi and T.~de~Lima,
  \emph{`knowable' as `known after an announcement'}, The Review of Symbolic
  Logic \textbf{1} (2008), pp.~305--334.

\bibitem{BdRV2001}
Blackburn, P., M.~de~Rijke and Y.~Venema, \enquote{Modal logic,} Cambridge
  University Press, 2001.

\bibitem{CMZ2009}
Chen, S., B.~Ma and K.~Zhang, \emph{On the similarity metric and the distance
  metric}, Theoretical Computer Science \textbf{410} (2009), pp.~2365--2376.

\bibitem{DLW2021}
Dong, H., X.~Li and Y.~N. W{\'{a}}ng, \emph{Weighted modal logic in epistemic
  and deontic contexts}, in: S.~Ghosh and T.~Icard, editors, \emph{Proceedings
  of the Eighth International Conference on Logic, Rationality and Interaction
  (LORI 2021)},  Lecture Notes of Theoretical Computer Science  \textbf{13039}
  (2021), pp. 73--87.

\bibitem{FHMV1995}
Fagin, R., J.~Y. Halpern, Y.~Moses and M.~Y. Vardi, \enquote{Reasoning about
  knowledge,} The MIT Press, 1995.

\bibitem{FHV1992}
Fagin, R., J.~Y. Halpern and M.~Y. Vardi, \emph{What can machines know? on the
  properties of knowledge in distributed systems}, Journal of the {ACM}
  \textbf{39} (1992), pp.~328--376.

\bibitem{HM1992}
Halpern, J.~Y. and Y.~Moses, \emph{A guide to completeness and complexity for
  modal logics of knowledge and belief}, Artificial Intelligence \textbf{54}
  (1992), pp.~319--379.

\bibitem{HLMP2018}
Hansen, M., K.~G. Larsen, R.~Mardare and M.~R. Pedersen, \emph{Reasoning about
  bounds in weighted transition systems}, Logical Methods in Computer Science
  \textbf{14} (2018), pp.~1--32.

\bibitem{Hintikka1962}
Hintikka, J., \enquote{Knowledge and Belief: An Introduction to the Logic of
  Two Notions,} Cornell University Press, Ithaca, New York, 1962.

\bibitem{LM2014}
Larsen, K.~G. and R.~Mardare, \emph{Complete proof systems for weighted modal
  logic}, Theoretical Computer Science \textbf{546} (2014), pp.~164--175.

\bibitem{LS1991}
Larsen, K.~G. and A.~Skou, \emph{Bisimulation through probabilistic testing},
  Information and Computation \textbf{94} (1991), pp.~1--28.

\bibitem{LW2022}
Liang, X. and Y.~N. W{\'{a}}ng, \emph{Epistemic logic via distance and
  similarity}, in: \emph{Proceedings of PRICAI 2022: Trends in Artificial
  Intelligence} (2022), pp. 32--45.

\bibitem{MvdH1995}
Meyer, J.-J.~C. and W.~van~der Hoek, \enquote{Epistemic Logic for AI and
  Computer Science,} Cambridge University Press, 1995.

\bibitem{NT2015}
Naumov, P. and J.~Tao, \emph{Logic of confidence}, Synthese \textbf{192}
  (2015), pp.~1821--1838.

\bibitem{Roelofsen2007}
Roelofsen, F., \emph{Distributed knowledge}, Journal of Applied Non-Classical
  Logics \textbf{16} (2007), pp.~255--273.

\bibitem{Sahlqvist1975}
Sahlqvist, H., \emph{Completeness and correspondence in the first and second
  order semantics for modal logic}, in: S.~Kanger, editor, \emph{Proceedings of
  the Third Scandinavian Logic Symposium},  Studies in Logic and the
  Foundations of Mathematics  \textbf{82}, Elsevier, 1975 pp. 110 -- 143.

\bibitem{TSK2005}
Tan, P.-N., M.~Steinbach and V.~Kumar, \enquote{Introduction to data mining,}
  Pearson, 2005.

\bibitem{vDvdHK2008}
van Ditmarsch, H., W.~van~der Hoek and B.~Kooi, \enquote{Dynamic Epistemic
  Logic,}  Synthese Library  \textbf{337}, Springer Netherlands, 2008.

\bibitem{WA2020}
W{\'a}ng, Y.~N. and T.~{\AA}gotnes, \emph{Simpler completeness proofs for modal
  logics with intersection}, in: M.~A. Martins and I.~Sedl{\'a}r, editors,
  \emph{Dynamic Logic: New Trends and Applications},  Lecture Notes in Computer
  Science  \textbf{12569} (2020), pp. 259--276.

\end{thebibliography}

\newpage
\appendix
\section{Proofs regarding Experssivity}\label{sec:app-exp}

In Figure~\ref{fig:expressivity}, every language, with the exception of \langcdm, has an arrow pointing to its immediate superlanguages. This is clearly true, as by definition, every language is at most as expressive as its superlanguages. In the case when $|\ag|=1$, a reverse arrow also exists between languages that either both contain common knowledge or neither contain common knowledge.

\begin{lemma}\label{lem:exp1}
When $|\ag|= 1$ (i.e., when there is only one agent available in the language), 
\begin{enumerate}
\item $\lang\equiv\langd\equiv\langm\equiv\langdm$
\item $\langc\equiv\langcd\equiv\langcm\equiv\langcdm$
\item\label{it:exp-c} $\langdm \prec \langc$, and hence any language in the first clause are less expressive than any language in the second clause.
\end{enumerate}
\end{lemma}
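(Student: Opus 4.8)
The plan is to handle the three clauses in order, exploiting the fact that when $|\ag|=1$ the only nonempty group is the singleton $\{a\}$, where $a$ is the unique agent. First I would observe that in this degenerate case the axioms (D1) and (F1) (equivalently the validities $D_{\{a\}}\phi\lra K_a\phi$ and $F_{\{a\}}\phi\lra K_a\phi$ from the Proposition) tell us that the only instances of the $D_G$ and $F_G$ operators that can ever appear are $D_{\{a\}}$ and $F_{\{a\}}$, and each of these is semantically equivalent, at every state of every model, to $K_a$. This immediately gives a truth-preserving translation: define a map on formulas that recursively rewrites every $D_{\{a\}}\psi$ and every $F_{\{a\}}\psi$ as $K_a\psi$ (and commutes with the booleans and with $C_{\{a\}}$). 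A routine induction on formula structure, using the semantic clauses in Definition~\ref{def:semantics} together with $\bigcup_{b\in\{a\}}C(b) = \bigcap_{b\in\{a\}}C(b) = C(a)$, shows the translation preserves truth in every pointed similarity model. This establishes clause~1 (the translation lands in \lang) and clause~2 (when $C_{\{a\}}$ is present the translation lands in \langc); the reverse directions $\lang\preceq\langd$ etc.\ are trivial since \lang\ is syntactically a sublanguage of each.

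For clause~3, I must show $\langdm\prec\langc$, i.e.\ that $\langc$ is at least as expressive as $\langdm$ but not conversely. The ``at least as expressive'' half follows by composing the clause-1 translation $\langdm\equiv\lang$ with the inclusion $\lang\subseteq\langc$. For the strict part I need a separating pair of models: two pointed similarity models (with a single agent) that agree on all \lang-formulas — hence, by clause~1, on all \langdm-formulas — but disagree on some \langc-formula. The natural candidates are built so that one has a finite $K_a$-reachability structure in which $C_{\{a\}}p$ holds while the other has an infinite tail along which $p$ eventually fails, with the two models bisimilar up to any finite $K_a$-depth; the formula $C_{\{a\}}p$ then separates them. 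Concretely I would take the standard ``$\mathbb{N}$ with a reflexive/symmetric successor-type similarity structure, $p$ true everywhere'' versus ``the same but with $p$ false at one far-off state,'' arranged so that the edge labels (all equal to $C(a)$ on the relevant pairs) make $K_a$ behave like the box of the underlying symmetric graph; a bounded-bisimulation (or direct $n$-depth induction) argument shows \lang-equivalence at the roots.

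The main obstacle is the strictness argument in clause~3: I must be careful that the separating models are genuine similarity models (they must satisfy Positivity and Symmetry from Definition~\ref{def:models}), and that the \lang-equivalence claim is airtight despite $K_a$ here ranging over a labeled edge relation rather than a bare Kripke relation. The cleanest route is to note that, via the translation $\rho$ of Definition~\ref{def:trans-w} and Lemma~\ref{lem:sem-trans}, \lang-satisfiability over similarity models coincides with $K_a$-satisfiability over symmetric Kripke models (single agent), so I can import the classical fact that $C_{\{a\}}$ is not definable in the basic modal/epistemic language over (symmetric, serial enough) frames — a standard compactness or bounded-bisimulation result — and transport the witnessing models back along $\rho$. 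Everything else (the inductions for clauses~1 and~2, the ``at least as expressive'' halves) is bookkeeping.
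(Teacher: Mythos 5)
Your handling of clauses~1 and~2 is correct and is essentially the paper's argument (the paper simply observes that $D_{\{a\}}$ and $F_{\{a\}}$ are semantically equivalent to $K_a$, so the operators are redundant; your explicit rewriting translation just spells this out). The gap is in clause~3. The separating-pair strategy you describe cannot work for the formula $C_{\{a\}}p$: by Definition~\ref{def:semantics}, $M,s\models C_{\{a\}}p$ iff $M,s\models E_{\{a\}}^n p$ for every $n$, and each $E_{\{a\}}^n p$ is itself an \lang-formula. Hence any two pointed models that agree on \emph{all} \lang-formulas automatically agree on $C_{\{a\}}p$, so the pair you are looking for does not exist. Your concrete sketch exhibits the symptom: if $p$ fails at a reachable state at finite distance $m$ from the root, the two models already disagree on the \lang-formula $E_{\{a\}}^m p$; if $p$ never fails at a reachable state, then $C_{\{a\}}p$ holds in that model too. ``$n$-bisimilar for every $n$'' is not the same as ``agreeing on all \lang-formulas in a way that transfers to $C$,'' and does not rescue the construction.

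The argument has to quantify over candidate formulas first, which is what the paper does: suppose some $\phi$ of \langdm (hence, by clause~1, of \lang) were equivalent to $C_{\{a\}}p$; let $k$ bound its length/modal depth; take a model satisfying $\{E_{\{a\}}^n p\mid n\le k\}\cup\{\neg C_{\{a\}}p\}$ at $w$ (e.g.\ a chain with $p$ failing only beyond distance $k$), and restrict to the states reachable from $w$ in at most $k$ steps. In the truncation, every $E_{\{a\}}^n p$ holds at $w$, so $C_{\{a\}}p$ holds, while $\neg\phi$ still holds because $\phi$ cannot see beyond depth $k$ --- contradiction. Equivalently one can run the compactness argument you gesture at in your last paragraph: $\{E_{\{a\}}^n p\mid n\in\mbN\}\cup\{\neg\phi\}$ is finitely satisfiable, \lang over similarity models is compact (e.g.\ by the strong completeness in Theorem~\ref{thm:completeness2}), yet the set is unsatisfiable since the $E_{\{a\}}^n p$ jointly force $C_{\{a\}}p$. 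That fallback is the right idea, but as written you frame it as ``transporting witnessing models back along $\rho$,'' which again presupposes a single separating pair; you need to commit explicitly to the per-formula truncation or to the compactness formulation for the proof to close.
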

\begin{proof}
1 \& 2. In the case when $|\ag|=1$ there is only one agent, and since $D_{\{a\}} \phi$ and $F_{\{a\}} \phi$ are equivalent to $K_a \phi$, the operators for distributed and mutual knowledge are redundant in this case. Hence the lemma.

3. We show that $\langc \not\preceq \langdm$, and so $\langdm \prec \langc$ since $\langdm\equiv\lang \preceq \langc$ by the first clause. Suppose towards a contradiction that there exists a formula $\phi$ of $\langdm$ equivalent to $C_{\{a\}}p$. Consider the set $\Phi=\{E_{\{a\}}^n p\mid n\in\mbN\}\cup\{\neg C_{\{a\}}p\}$. It is not hard to see that any finite subset of $\Phi$ is satisfiable, but not $\Phi$ itself. 
Let $k$ be the length of $\phi$ (refer to a modal logic textbook for its definition), and suppose $\{E_{\{a\}}^n p\mid n\in\mbN, n\leq k\}\cup\{\neg\phi\}$ is satisfied at a state $w$ in a model $M=(W,\ab,E,C,\nu)$. For any $s,t \in W$, we say that $s$ reaches $t$ in one step if $C(a)\subseteq E(s,t)$. Consider the model $M_k=(W_k,\ab,E,C,\nu)$, where $W_k$ is set of states reachable from $w$ in at most $k$ steps. We can verify that $M_k, w \models \{E_{\{a\}}^n p\mid n\in\mbN\}\cup\{\neg\phi\}$, which implies that $\Phi$ is satisfiable, leading to a contradiction.
\end{proof}

We now proceed to elucidate the absence of arrows in the figure for the case when $|\ag| \geq 2$.

\begin{lemma}\label{lem:exp2}
When $|\ag| \geq 2$ (i.e., when there are at least two agents available in the language), 
\begin{enumerate}
\item For any superlanguage $\langl$ of \langc, and any sublanguage $\langl'$ of $\langdm$, it is not the case that $\langl \preceq \langl'$;
\item For any superlanguage $\langl$ of \langd, and any sublanguage $\langl'$ of $\langcm$, it is not the case that $\langl \preceq \langl'$;
\item For any superlanguage $\langl$ of \langm, and any sublanguage $\langl'$ of $\langcd$, it is not the case that $\langl \preceq \langl'$.
\end{enumerate}
\end{lemma}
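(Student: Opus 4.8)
The plan is to prove all three non-expressivity claims by the standard model-theoretic technique: for each claim, exhibit two pointed similarity models that agree on every formula of the weaker language $\langl'$ but are separated by a single formula of the stronger language $\langl$. Since a putative translation would have to preserve truth, this yields $\langl \not\preceq \langl'$. Because the claims are stated for \emph{all} superlanguages $\langl$ of the designated language and \emph{all} sublanguages $\langl'$ of the designated weaker one, it suffices in each case to separate the smallest superlanguage (namely \langc, \langd, \langm respectively) from the largest relevant sublanguage (\langdm, \langcm, \langcd respectively), using a distinguishing formula built only from the extra operator ($C_G$, $D_G$, or $F_G$). Throughout I will restrict to two agents $a,b$, which is harmless since $|\ag|\geq 2$.

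First I would handle clause 1 ($\langc \not\preceq \langdm$). I would take two models that are ``locally $\langdm$-bisimilar'' around the point but differ in the transitive-closure behaviour witnessed by $C_{ab}p$: concretely, a finite $p$-path of the kind used in Example~\ref{ex2} (where $p$ holds along a $\{\lambda,\pi\}/\{\lambda,\mu\}$-path) versus a variant where $p$ fails at a state reachable only after enough alternating steps, with the capability functions arranged so that the union $C(a)\cup C(b)$ and the intersection $C(a)\cap C(b)$ relations (governing $D_{ab}$ and $F_{ab}$) see the same labelled-region as in the first model up to the relevant modal depth, while the mutual-knowledge relation $C(a)\cup C(b)$ in the edge-label sense keeps extending. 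The key is that $D_{ab}$ and $F_{ab}$ each quantify over a \emph{single} fixed relation (no iteration), whereas $C_{ab}$ quantifies over arbitrarily long paths; a sufficiently long path makes the two models $n$-equivalent for every fixed $n$ in $\langdm$ but $C_{ab}p$-distinguishable. For clauses 2 and 3 I would dualize: for clause 2, separate $D_{ab}$ (a genuinely ``finer'' relation $C(a)\cup C(b) \supseteq C(a)$) from everything expressible with $K_a, K_b, C_{ab}, F_{ab}$ by a pair of models that coincide on all individual- and field-relations but differ on a $D_{ab}$-accessible state; the point is that $\bigcup_{a\in G}C(a)$ can be a strict superset of each $C(a)$ and of $\bigcap$, so a state can be $D_{ab}$-visible yet invisible to every $K_a$, $C_{ab}$, and $F_{ab}$. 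Clause 3 is the mirror image: $F_{ab}$ uses $\bigcap_{a\in G}C(a)$, which can be strictly \emph{smaller} than every $C(a)$, so a state can be $F_{ab}$-visible while invisible to all $K_a$, $D_{ab}$, and — by choosing the shared abilities to avoid cycles — $C_{ab}$ as well.

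I expect the main obstacle to be the simultaneous control of all the operators in the richer sublanguage with a \emph{single} pair of models, since $\langdm$, $\langcm$, $\langcd$ each combine two group operators. The delicate point is tuning the capability function $C$ and the edge labels so that, for instance in clause 3, enlarging the intersection relation to reach the distinguishing state does not also create a short $C(a)$-edge or an $ab$-path that one of $K_a$, $C_{ab}$ could detect. I would address this by keeping the ability set $A$ small (three or four abilities), making the distinguishing state reachable only via an edge labelled by exactly the intersection $C(a)\cap C(b)$ and by no superset of any $C(a)$, and making all other edges symmetric copies that the weaker operators already ``see the same way'' in both models — ideally by arranging an actual bisimulation-style back-and-forth for the weaker language, or, more cheaply, by a direct induction on formula structure bounded by the modal depth, exactly as in the proof of Lemma~\ref{lem:exp1}(\ref{it:exp-c}) where a $k$-step truncation argument was used. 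The common-knowledge cases (clauses 1, and the $C_{ab}$-part of clauses 2 and 3) additionally need the compactness-style observation, already exploited in Lemma~\ref{lem:exp1}, that the infinite type $\{E_G^n p\}\cup\{\neg C_G p\}$ is finitely satisfiable; combined with a $k$-truncation of the model this defeats any fixed finite-length formula of the rival language.
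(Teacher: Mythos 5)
Your strategy is essentially the paper's: clause 1 is handled by re-using the truncation/finite-satisfiability argument from Lemma~\ref{lem:exp1}(\ref{it:exp-c}) for $C_{\{a\}}p$, and clauses 2 and 3 by exhibiting a pair of pointed models agreeing on the weaker language but separated by a $D_{ab}$- or $F_{ab}$-formula, exploiting exactly the facts you identify (the union $C(a)\cup C(b)$ can exceed every individual $C(a)$, and the intersection $C(a)\cap C(b)$ can be strictly smaller than each). The only thing missing is the concrete witnesses, which in the paper are quite small --- a labelled four-cycle versus a single reflexive point, distinguished by $D_{ab}\bot$, and a two-state model with one edge labelled by exactly $C(a)\cap C(b)$ versus a single reflexive point, distinguished by $F_{ab}p$ --- together with the routine induction showing agreement on the weaker language.
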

\begin{proof}
1. The Proof of Lemma~\ref{lem:exp1}(\ref{it:exp-c}) can be used here to show that $\langc \not\preceq \langdm$ also when $|\ag| \geq 2$.

2. Consider models $M=(W,\ab,E,C,\nu)$ and $M'=(W',\ab,E',C,\nu')$, where $\ab=\{1,2,3\}$, $C(a)=\{1,2\}$, $C(b)=\{1,3\}$ (if there are more agents in the language, they are irrelevant here), and are illustrated below.
\begin{center}
$M$\quad$\xymatrix@R=3em@C=4em{
			*+o[F]{\frac{u_1}{p}} \ar@{-}[d]_{1,2} \ar@{-}[r]^{1,3} & *+o[F]{\frac{u_2}{p}} \ar@{-}[d]^{1,2} \\
			*+o[F]{\frac{u_4}{p}} \ar@{-}[r]_{1,3} & *+o[F]{\frac{u_3}{p}}
		}$
\qquad\qquad
$M'$\quad$\xymatrix@R=3em@C=4em{
			*+o[F]{\frac{u'}{p}} \ar@{-}@(dl,dr)_{1,2,3} 
		}$
\vspace{2ex}
\end{center}
We can show by induction that for any formula $\phi$ of \langcm, $M,u_1\models\phi$ iff $M',u'\models\phi$. On the other hand, $M,u_1\models D_{ab}\bot$ but $M',u'\not\models D_{ab}\bot$.
It means that no \langcm-formula can discern between $M,u_1$ and $M',u'$, while languages with distributed knowledge can. Thus the lemma holds.

3. Consider similarity models $M=(W,\ab,E,C,\nu)$ and $M'=(W',\ab,E',C,\nu')$, where $\ab=\{1,2,3\}$, $C(a)=\{1,2\}$, $C(b)=\{1,3\}$, and are illustrated below.
\begin{center}
\vspace{2ex}
$M$\quad$\xymatrix@R=3em@C=4em{
			*+o[F]{\frac{u_1}{p}} \ar@{-}@(dl,dr)_{1,2,3} \ar@{-}[r]^{1} & *+o[F]{\frac{u_2}{}} \ar@{-}@(dl,dr)_{1,2,3} 
		}$
\qquad\qquad
$M'$\quad$\xymatrix@R=3em@C=4em{
			*+o[F]{\frac{u'}{p}} \ar@{-}@(dl,dr)_{1,2,3} 
		}$
\vspace{2ex}
\end{center}
We can show by induction that for any formula $\phi$ of \langcd, $M,u_1\models\phi$ iff $M',u'\models\phi$. Meanwhile, we have $M,u_1\not\models F_{ab}p$ and $M',u'\models F_{ab}p$. It follows that no \langcd-formula can discern between $M,u_1$ and $M',u'$, while languages with field knowledge can. Thus the lemma holds.
\end{proof}

As per Figure~\ref{fig:expressivity}, Lemma~\ref{lem:exp2} suggests that there is not an arrow or a path of arrows leading from \langc (or any language having an arrow or a path of arrows originating from \langc) to \langdm (or any language with an arrow or a path of arrows pointing to \langdm). Similar relationships exist between \langd and \langcm, and between \langm and \langcd. Furthermore, in Figure~\ref{fig:expressivity}, if there is an arrow or a path of arrows leading from one language to another, and not the other way round, this signifies that the first language is less expressive than the second. If there is no arrow or path of arrows in either direction between two languages, they are deemed incomparable. These observations lead us directly to the following corollary.

\begin{corollary}
When $|\ag|\geq 2$,
\begin{enumerate}
\item $\lang \prec \langc$, $\langd \prec \langcd$, $\langm \prec \langcm$ and $\langdm \prec \langcdm$;
\item $\lang \prec \langd$, $\langc \prec \langcd$, $\langm \prec \langdm$ and $\langcm \prec \langcdm$;
\item $\lang \prec \langm$, $\langc \prec \langcm$, $\langd \prec \langdm$ and $\langcd \prec \langcdm$;
\item \langc, \langd and \langm are pairwise incomparable;
\item \langcd, \langcm and \langdm are pairwise incomparable;
\item \langc is incomparable with \langdm;
\item \langd is incomparable with \langcm;
\item \langm is incomparable with \langcd.
\end{enumerate}
\end{corollary}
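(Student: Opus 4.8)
The corollary is a bookkeeping consequence of the two expressivity lemmas together with the arrow diagram in Figure~\ref{fig:expressivity}(b). The strategy is: (i) read off from the diagram all the positive directions (``arrow or path of arrows'') that give $\langl\preceq\langl'$; (ii) invoke Lemma~\ref{lem:exp2} to rule out the reverse directions; (iii) combine to obtain strict inequalities $\prec$ or incomparabilities as appropriate. No new model constructions are needed — everything rests on Lemmas~\ref{lem:exp1} and~\ref{lem:exp2} and the already-established fact that every language is at most as expressive as its superlanguages.

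\textbf{Steps for the strict inequalities (items 1--3).} For item~1, each pair $\lang\preceq\langc$, $\langd\preceq\langcd$, $\langm\preceq\langcm$, $\langdm\preceq\langcdm$ holds because the left language is a sublanguage of the right. For strictness I need the reverse to fail: $\langc\not\preceq\lang$ and the three analogues. Here I would note $\lang,\langd,\langm,\langdm$ are all sublanguages of $\langdm$, so $\langc\not\preceq\langdm$ from Lemma~\ref{lem:exp2}(1) gives $\langc\not\preceq\langl'$ for each such $\langl'$ by transitivity of ``$\preceq$'' (if $\langc\preceq\langl'$ and $\langl'\preceq\langdm$ then $\langc\preceq\langdm$). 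Similarly $\langcd,\langcm,\langcdm$ are superlanguages of $\langc$, so Lemma~\ref{lem:exp2}(1) also yields $\langcd\not\preceq\langd$ etc., after observing the relevant sub/superlanguage containments. Items~2 and~3 are the same argument with Lemma~\ref{lem:exp2}(2) (for the $\langd$-vs-$\langcm$ barrier, giving $\langd\not\preceq\lang$, $\langcd\not\preceq\langc$, $\langdm\not\preceq\langm$, $\langcdm\not\preceq\langcm$) and Lemma~\ref{lem:exp2}(3) (for the $\langm$-vs-$\langcd$ barrier), respectively.

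\textbf{Steps for the incomparabilities (items 4--8).} For item~4, $\langc\not\preceq\langd$ and $\langc\not\preceq\langm$ follow from Lemma~\ref{lem:exp2}(1) since $\langd$ and $\langm$ are sublanguages of $\langdm$; $\langd\not\preceq\langc$ and $\langd\not\preceq\langm$ follow from Lemma~\ref{lem:exp2}(2) since $\langc,\langm$ are sublanguages of $\langcm$; $\langm\not\preceq\langc$ and $\langm\not\preceq\langd$ follow from Lemma~\ref{lem:exp2}(3) since $\langc,\langd$ are sublanguages of $\langcd$. Item~5 is identical using that $\langcm,\langdm$ are sublanguages of... wait, rather: $\langcd$ is a superlanguage of $\langc$ and a sublanguage of $\langcdm$; more directly, apply Lemma~\ref{lem:exp2}(1) to the pair $(\langcd,\langcm)$ and $(\langcd,\langdm)$ — note $\langcm,\langdm$ are sublanguages of... no: I should instead note $\langcd\supseteq\langc$ and $\langcm,\langdm\subseteq\langcdm$, then Lemma~\ref{lem:exp2}(1) gives $\langcd\not\preceq\langdm$, and its symmetric partners from (2),(3) give the other five non-reductions among $\langcd,\langcm,\langdm$. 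Items~6--8 are literally the statements $\langc\not\preceq\langdm$, $\langdm\not\preceq\langc$, etc., which are instances of Lemma~\ref{lem:exp2} (parts 1, 2, 3 respectively) in both directions, since e.g. $\langc$ is a superlanguage of $\langc$ and $\langdm$ is a sublanguage of $\langdm$, and conversely $\langdm\supseteq\langm$ (wait — $\langm$ vs $\langc$: Lemma~\ref{lem:exp2}(2) has $\langd$ vs $\langcm$, and (3) has $\langm$ vs $\langcd$; for $\langc$ vs $\langdm$ I need only part (1) applied both ways since $\langc$ is a superlanguage of $\langc$, $\langdm$ a sublanguage of $\langdm$, giving $\langc\not\preceq\langdm$, and separately $\langdm\not\preceq\langc$ because $\langdm\supseteq\langd$ and $\langc\subseteq\langcm$ so Lemma~\ref{lem:exp2}(2) applies).

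\textbf{Main obstacle.} There is no mathematical difficulty here; the only risk is combinatorial — making sure every one of the roughly $16$ non-reductions is actually licensed by one of the three parts of Lemma~\ref{lem:exp2} under the correct sub/superlanguage inclusions, and that each strictness claim is backed by both a $\preceq$ (from the diagram) and a $\not\succeq$ (from the Lemma). I would organize the verification as a small table indexed by the three ``barriers'' ($\langc/\langdm$, $\langd/\langcm$, $\langm/\langcd$), listing for each barrier all pairs $(\langl,\langl')$ with $\langl\supseteq$ the high side and $\langl'\subseteq$ the low side, and checking that the diagram indeed has no reverse path. This is routine once the inclusion lattice of the eight languages is drawn out.
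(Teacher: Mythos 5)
Your proposal is correct and follows essentially the same route as the paper: the positive directions $\langl\preceq\langl'$ come from sublanguage inclusions, and every required non-reduction is an instance of one of the three ``barriers'' in Lemma~\ref{lem:exp2}, exactly as the paper's preceding discussion indicates. Your case-by-case citations (including the self-corrections, e.g.\ using part~(2) rather than part~(1) for $\langdm\not\preceq\langc$) all land on the right part of the lemma, so the bookkeeping checks out.
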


\section{Proof of Lemma~\ref{lem:trans-w}}\label{sec:app-lem}
We provide a proof for Lemma~\ref{lem:trans-w} while first repeating it:
\begin{lemma}
The following hold:
\begin{enumerate}
\item Given a symmetric Kripke model $N$, its translation $N^\rho$ is a similarity model;
\item For any \langcd-formula $\phi$, any symmetric Kripke model $N$ and any state $s$ of $N$, $N,s \Vdash \phi$ iff $N^\rho,s \models \phi$.
\end{enumerate}
\end{lemma}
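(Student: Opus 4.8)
The statement has two parts. Part~1 asks that $N^\rho$ be a similarity model; this is the content of clause~\ref{it:comp} of Lemma~\ref{lem:sem-trans} and is essentially a verification. I would check that $E$ maps into $\wp(\ag \cup \{b\})$ (immediate from its definition $E(s,t)=\{a\in\ag\mid(s,t)\in R(a)\}$, which never contains $b$), that $C$ maps $\ag$ into $\wp(\ag\cup\{b\})$ (each $C(a)=\{a\}$), and that $\nu$ is a valuation. For the two frame conditions: \emph{symmetry} of $E$ follows directly from the symmetry of each relation $R(a)$ in the symmetric Kripke model $N$; \emph{positivity} holds because $b\notin E(s,t)$ for any $s,t$, so $E(s,t)\neq\ag\cup\{b\}=A$ always, and the implication ``$E(s,t)=A\Rightarrow s=t$'' is vacuously true. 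This is why the dummy agent $b$ was introduced: it guarantees positivity without imposing any reflexivity-type constraint on $N$.

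\textbf{Proof plan (continued).} Part~2 is a truth-preservation claim proved by induction on the structure of the \langcd-formula $\phi$. The atomic and boolean cases are routine, using that $\nu$ and $V$ agree on propositional variables and that $N^\rho$ has the same domain as $N$. The key cases are the modalities. For $K_a\psi$: unfolding the similarity semantics, $N^\rho,s\models K_a\psi$ iff for all $t$ with $C(a)\subseteq E(s,t)$ we have $N^\rho,t\models\psi$; since $C(a)=\{a\}$, the condition $C(a)\subseteq E(s,t)$ is equivalent to $a\in E(s,t)$, i.e.\ to $(s,t)\in R(a)$ by definition of $E$, so this matches the Kripke clause $N,s\Vdash K_a\psi$, and we apply the induction hypothesis. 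For $C_G\psi$: this reduces to iterating the $K_a$ case, since $C_G$ is interpreted in both frameworks as truth of $E_G^n\psi$ for all $n$, and $E_G$ is a finite conjunction of $K_a$'s; so the $C_G$ case follows from the $K_a$ case together with the induction hypothesis. The interesting case is $D_G\psi$: $N^\rho,s\models D_G\psi$ iff for all $t$ with $\bigcup_{a\in G}C(a)\subseteq E(s,t)$ we have $N^\rho,t\models\psi$; now $\bigcup_{a\in G}C(a)=\bigcup_{a\in G}\{a\}=G$, so the condition becomes $G\subseteq E(s,t)$, i.e.\ $(s,t)\in\bigcap_{a\in G}R(a)$, which is exactly the classical Kripke clause for distributed knowledge; apply the induction hypothesis.

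\textbf{Main obstacle.} There is no deep obstacle here — the proof is a structural induction. The one point requiring care is making sure the translation of the \emph{capability} function ($C(a)=\{a\}$) correctly mediates between ``agent $a$'s relation'' in the Kripke model and ``abilities at least $C(a)$'' in the similarity model, in particular that $\bigcup_{a\in G}C(a)=G$ so that the union-of-abilities semantics for $D_G$ lands precisely on the intersection-of-relations semantics for classical distributed knowledge. Since the language \langcd\ contains no $F_G$ operator (as noted in the footnote, field knowledge has no classical counterpart), the $\bigcap_{a\in G}C(a)$ clause never arises and no further subtlety appears. I would present the $K_a$ and $D_G$ cases in full and remark that $C_G$ follows by iterating $K_a$, with the boolean cases left to the reader.
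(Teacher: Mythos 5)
Your proposal is correct and follows essentially the same route as the paper's proof: positivity via the dummy agent $b$ never appearing in any $E(s,t)$, symmetry inherited from the symmetric $R(a)$, and a structural induction whose modal cases hinge on $C(a)\subseteq E(s,t)$ iff $(s,t)\in R(a)$ and $\bigcup_{a\in G}C(a)\subseteq E(s,t)$ iff $(s,t)\in\bigcap_{a\in G}R(a)$. The paper likewise treats $C_G$ as routine and spells out only the $K_a$ and $D_G$ cases.
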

\noindent\textbf{Proof.}
(i) Let $N = (W,R,V)$ be a symmetric Kripke model, and its translation $N^\rho = (W, \ag \cup \{b\}, E, C, \nu)$. For any $a \in \ag \cup \{b\}$ and $s, t\in W$, we have:
\[
\begin{array}{llll}
	a\in E(s,t) & \iff & (s,t)\in R(a) & \text{(Def.~\ref{def:trans-w})}\\
	& \iff & (t,s)\in R(a) & \text{(since $R(a)$ is symmetric)}\\
	& \iff & a\in E(t,s). & \text{(Def.~\ref{def:trans-w})}\\
\end{array}
\]
Hence $N^\rho$ satisfies symmetry. Furthermore, $N^\rho$ satisfies positivity, as there cannot be any $s,t \in W$ such that $E(s,t) = A \cup \{b\}$ and $s \neq t$. Hence $N^\rho$ is a similarity model.

(ii) Let $N = (W,R,V)$ and its translation $N^\rho = (W, \ag \cup \{b\}, E, C, \nu)$. We show the lemma by induction on $\phi$. The cases involving atomic propositions, Boolean connectives, and common knowledge are straightforward to verify since their semantic definitions follow a consistent pattern that facilitates smooth inductive reasoning. In this proof, we focus specifically on the cases for individual and distributed knowledge. It should be noted that the case for individual knowledge can be regarded as a particular instance of distributed knowledge; however, we include the details here for readers who seek a thorough clarity:
\[
\begin{array}{rcll}
N,s \Vdash K_a\psi & \iff & \text{for all $t \in W$, if $(s,t) \in R(a)$ then $N,t \Vdash \psi$}& \\
& \iff & \text{for all $t \in W$, if $a \in E(s,t)$ then $N,t \Vdash \psi$} & \\
& \iff &  \text{for all $t \in W$, if $C(a) \subseteq E(s,t)$ then $N,t \Vdash \psi$} & \\
& \iff & \text{for all $t \in W$, if $C(a) \subseteq E(s,t)$ then $N^\rho,t \models \psi$} &\\
& \iff & N^\rho,s \models K_a\psi. &\\[1em]
N,s \Vdash D_G\psi & \iff & \text{for all $t \in W$, if $(s,t) \in \bigcap_{a \in G} R(a)$, then $N,t \Vdash \psi$}& \\
& \iff & \text{for all $t\in W$, if $(s,t) \in R(a)$ for all $a\in G$, then $N,t \Vdash \psi$} & \\
& \iff & \text{for all $t\in W$, if $C(a)\subseteq E(s,t)$ for all $a\in G$, then $N,t \Vdash \psi$} & \\
& \iff & \text{for all $t\in W$, if $\bigcup_{a \in G} C(a)\subseteq E(s,t)$, then $N,t \models \psi$} &\\
& \iff & \text{for all $t\in W$, if $\bigcup_{a \in G} C(a)\subseteq E(s,t)$, then $N^\rho,t \models \psi$} &\\
& \iff & N^\rho,s \models D_G\psi. &\\
\end{array}
\]
\vspace{1em}

\section{Completeness of \KB by the Canonical Model Method}
\label{sec:app-can}

\begin{lemma}[canonicity]
The canonical model for \l is a similarity model.
\end{lemma}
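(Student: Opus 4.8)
The plan is to verify the two defining conditions of a similarity model from Definition~\ref{def:models} — positivity and symmetry — for the canonical model $\CM = (\CW, \CA, \CE, \CC, \CV)$ of Definition~\ref{def:cm-el}, after a quick sanity check that the five components have the right types. Types: $\CW$ is a set (of maximal \KB-consistent sets), hence nonempty since \KB is consistent; $\CA = \wp(\ag)$ is a set; $\CE$ maps into $\wp(\CA)$ because each $E_a(\Phi,\Psi)$ is either $\CC(a) \in \wp(\CA)$ or $\emptyset$, so their union over the finite set \ag lies in $\wp(\CA)$; $\CC$ maps each agent to $\{G \subseteq \ag \mid a \in G\} \in \wp(\CA)$; and $\CV$ is a valuation. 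So the only real content is the two frame conditions.

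For \textbf{symmetry}, I would observe that the defining clause for $E_a(\Phi,\Psi)$ is visibly symmetric in $\Phi$ and $\Psi$: the condition ``$\{\chi \mid K_a\chi \in \Phi\} \subseteq \Psi$ and $\{\chi \mid K_a\chi \in \Psi\} \subseteq \Phi$'' is unchanged under swapping $\Phi$ and $\Psi$, so $E_a(\Phi,\Psi) = E_a(\Psi,\Phi)$ for every $a$, and taking unions gives $\CE(\Phi,\Psi) = \CE(\Psi,\Phi)$. (This is exactly where the axiom B is implicitly doing its work — B is what makes this two-sided condition the correct one for a canonical accessibility relation — but for the bare statement that $\CM$ is a model, symmetry of the definition is all that is needed.)

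For \textbf{positivity}, I need: if $\CE(\Phi,\Psi) = \CA$ then $\Phi = \Psi$. Suppose $\CE(\Phi,\Psi) = \wp(\ag)$. The key point is that $\CE(\Phi,\Psi) = \bigcup_{a \in \ag} E_a(\Phi,\Psi)$, and each $E_a(\Phi,\Psi)$ is either $\emptyset$ or $\CC(a) = \{G \mid a \in G\}$. For a given set $G \subseteq \ag$, an element $G$ of $\wp(\ag)$ can only appear in $E_a(\Phi,\Psi)$ for $a \in G$; in particular, to get \emph{all} of $\wp(\ag)$ — including the singletons $\{a\}$ for each $a$ and indeed $\ag$ itself (taking $G = \ag$, which forces $E_a(\Phi,\Psi) = \CC(a)$ to be nonempty for every $a$) — we need $E_a(\Phi,\Psi) = \CC(a)$ for every $a \in \ag$, i.e. $\{\chi \mid K_a\chi \in \Phi\} \subseteq \Psi$ for all $a$. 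Now fix any agent $a$ (using that \ag is nonempty). From $\{\chi \mid K_a\chi \in \Phi\} \subseteq \Psi$ together with the T-like consequence available in \KB — actually, better: using that B gives $\phi \to K_a \neg K_a \neg \phi$, equivalently $\neg K_a \neg K_a \phi \to$ ... — hmm, the cleanest route is: \KB proves $K_a\phi \to \phi$? No, B alone does not give T. Instead I would argue directly that $\{\chi \mid K_a\chi \in \Phi\} \subseteq \Psi$ and $\{\chi \mid K_a\chi \in \Psi\} \subseteq \Phi$ together force $\Phi = \Psi$: given $\phi \in \Phi$, by B we have $K_a \neg K_a \neg \phi \in \Phi$, so $\neg K_a \neg \phi \in \Psi$; and if $\phi \notin \Psi$ then $\neg\phi \in \Psi$, so $K_a\neg\phi \notin \Psi$ would need to be derived — push this through using the maximal consistency of $\Psi$ and the symmetric inclusion the other way to reach a contradiction, yielding $\phi \in \Psi$, and symmetrically $\Psi \subseteq \Phi$, hence $\Phi = \Psi$.

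The \textbf{main obstacle} is precisely this last step: extracting $\Phi = \Psi$ from the two-sided modal inclusion using only the axioms of \KB (which has B but not T, so $\CE$ need not be reflexive), rather than from any semantic reflexivity. The argument has to be the purely syntactic one via axiom B and properties of maximal consistent sets — essentially the standard fact that in a symmetric canonical frame, the canonical relation restricted to pairs satisfying the ``mutual'' condition behaves well — and I would lay it out carefully as the one nontrivial lemma-within-the-proof, with the type-checking and symmetry dispatched in a sentence each.
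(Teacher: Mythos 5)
Your symmetry argument is fine and matches the paper's. The positivity argument, however, has a genuine gap, and the route you chose cannot be repaired. The step you flag as "the main obstacle" — deriving $\Phi = \Psi$ from the two-sided inclusion $\{\chi \mid K_a\chi \in \Phi\} \subseteq \Psi$ and $\{\chi \mid K_a\chi \in \Psi\} \subseteq \Phi$ — is not merely hard, it is false. That two-sided condition just says the two maximal consistent sets are mutually related by the canonical accessibility relation for $a$; in \KB (which has B but not T) distinct maximal consistent sets are routinely related in both directions, and no amount of pushing axiom B through will collapse them into one set. So the "one nontrivial lemma-within-the-proof" you propose to write out carefully does not exist.

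The observation that makes positivity go through — and the one the paper uses — is that the antecedent $\CE(\Phi,\Psi) = \CA$ can never hold, so positivity is vacuous. The point you missed is that $\CA = \wp(\ag)$ contains the element $\emptyset$ (the empty \emph{group}, which is a perfectly good member of the power set), whereas $\CC(a) = \{G \subseteq \ag \mid a \in G\}$ never contains $\emptyset$ for any $a$. Since $\CE(\Phi,\Psi) = \bigcup_{a \in \ag} E_a(\Phi,\Psi)$ and each $E_a(\Phi,\Psi)$ is either $\CC(a)$ or the empty set of abilities, we get $\CE(\Phi,\Psi) \subseteq \wp(\ag)\setminus\{\emptyset\} \subsetneq \CA$ for all $\Phi,\Psi$. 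Your enumeration of which elements of $\wp(\ag)$ must appear ("the singletons $\{a\}$ \dots and indeed $\ag$ itself") walks right past the one element, $\emptyset$, that can never appear. With that fixed, the whole proof is two lines, and no syntactic work with axiom B is needed for positivity at all.
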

\begin{proof}
Let $\CM=(\CW,\CA,\CE,\CC,\CV)$ be the canonical model for \l. Notice that $\emptyset \notin \CC(a)$ for any agent $a$, so $\CE(s,t)\neq\CA$ for any $s,t \in \CW$, ensuring positivity. The symmetry of the model is evident as $\CE(s,t)=\CE(t,s)$ for any $s,t \in \CW$. Therefore, $\CM$ is a similarity model.
\end{proof}

\begin{lemma}[Truth Lemma]
\label{lem:truth-KB}
Let $\CM = (\CW,\CA,\CE,\CC,\CV)$ be the canonical model for \l. For any $\Gamma \in \CW$ and any \lang-formula $\phi$, we have $\phi \in \Gamma$ iff $\CM,\Gamma \models_{\l} \phi$.
\end{lemma}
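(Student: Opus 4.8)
The plan is to prove the Truth Lemma for the canonical model $\CM$ of \l\ by induction on the structure of the \lang-formula $\phi$. The propositional base case ($\phi = p$) is immediate from the definition of $\CV$, and the Boolean cases ($\phi = \neg\psi$ and $\phi = \psi \ra \chi$) follow routinely from the induction hypothesis together with the standard properties of maximal \KB-consistent sets (completeness, consistency, deductive closure). The only substantive case is $\phi = K_a\psi$, so I would devote the bulk of the argument to it.

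For the modal case, I would argue both directions. For the left-to-right direction, suppose $K_a\psi \in \Gamma$ and let $\Delta \in \CW$ be such that $\CC(a) \subseteq \CE(\Gamma,\Delta)$. By the definition of $\CE$ as $\bigcup_{a' \in \ag} E_{a'}(\Gamma,\Delta)$ and the definition of $\CC$, the only way $\CC(a) = \{G \subseteq \ag \mid a \in G\}$ can be contained in $\CE(\Gamma,\Delta)$ is if $E_a(\Gamma,\Delta) = \CC(a)$ (no other $E_{a'}$ can contribute the singleton-type set $\{a\}$ or indeed all supersets of $\{a\}$ — here I would check carefully that $\CC(a) \subseteq \bigcup_{a'} E_{a'}(\Gamma,\Delta)$ forces $\CC(a) \subseteq E_a(\Gamma,\Delta)$, using that each $E_{a'}(\Gamma,\Delta)$ is either $\CC(a')$ or $\emptyset$ and that $\{a\} \in \CC(a')$ iff $a' = a$). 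Hence $\{\chi \mid K_a\chi \in \Gamma\} \subseteq \Delta$, so $\psi \in \Delta$, and by the induction hypothesis $\CM,\Delta \models_\l \psi$; since $\Delta$ was arbitrary, $\CM,\Gamma \models_\l K_a\psi$.

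For the right-to-left direction I would prove the contrapositive: if $K_a\psi \notin \Gamma$, then the set $\{\chi \mid K_a\chi \in \Gamma\} \cup \{\neg\psi\}$ is \KB-consistent (the standard Lindenbaum-style argument: otherwise $K_a$ of a finite conjunction of the $\chi$'s would entail $K_a\psi$ by normality, putting $K_a\psi$ in $\Gamma$), so it extends to some $\Delta \in \CW$. Then $\{\chi \mid K_a\chi \in \Gamma\} \subseteq \Delta$; I also need $\{\chi \mid K_a\chi \in \Delta\} \subseteq \Gamma$, which is where axiom B (symmetry, $\phi \ra K_a\neg K_a\neg\phi$) enters — this is the classical fact that for symmetric models the canonical accessibility relation is symmetric. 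Consequently $E_a(\Gamma,\Delta) = \CC(a)$, hence $\CC(a) \subseteq \CE(\Gamma,\Delta)$, and since $\neg\psi \in \Delta$ the induction hypothesis gives $\CM,\Delta \not\models_\l \psi$, so $\CM,\Gamma \not\models_\l K_a\psi$.

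The main obstacle I anticipate is not the classical modal-logic bookkeeping but the bridging between the edge-weight semantics and the ``$\CC(a) \subseteq \CE(\Gamma,\Delta)$'' condition: one must verify precisely that the aggregated edge function $\CE = \bigcup_{a'} E_{a'}$, together with the specific choice $\CC(a) = \{G \mid a \in G\}$, makes ``$\CC(a) \subseteq \CE(\Gamma,\Delta)$'' equivalent to ``the $a$-clause of the canonical construction fires,'' i.e.\ to $\{\chi \mid K_a\chi \in \Gamma\} \subseteq \Delta$ and vice versa. Once that equivalence is nailed down, the argument reduces to the familiar Truth Lemma for \textbf{KB}. After the Truth Lemma, strong completeness of \KB\ for \l\ follows in the usual way: a \KB-consistent set extends by Lindenbaum to some $\Gamma \in \CW$, and the Truth Lemma makes $\Gamma$ itself a satisfying state in $\CM$.
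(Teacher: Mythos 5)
Your proposal is correct and follows essentially the same canonical-model argument as the paper: the same induction, the same observation that $\CC(a)\subseteq\CE(\Gamma,\Delta)$ forces $E_a(\Gamma,\Delta)=\CC(a)$ via the singleton $\{a\}$, and the same use of axiom B to secure the reverse inclusion $\{\chi\mid K_a\chi\in\Delta\}\subseteq\Gamma$. The only (harmless) variation is in the right-to-left direction, where you extend $\{\neg\psi\}\cup\{\chi\mid K_a\chi\in\Gamma\}$ and then invoke the B-induced symmetry of the canonical relation, whereas the paper builds that symmetry into the witness set by adding $\{\neg K_a\neg\chi\mid\chi\in\Gamma\}$ before extending.
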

\begin{proof}
We will only demonstrate the case when $\phi$ is of the form $K_a \psi$ here.

Assuming $K_a\psi\in\Gamma$, but $\CM,\Gamma\not\models_\l K_a\psi$, there would exist a $\Delta\in \CW$ such that $\CC(a)\subseteq \CE (\Gamma,\Delta)$ and $\CM,\Delta\not\models_\l \psi$. Consequently, $\{\chi\mid K_a\chi\in\Gamma\}\subseteq\Delta$ (otherwise $\{a\} \notin \CE(\Gamma,\Delta)$, contradicting $\{a\} \in \CC(a)$). Thus, $\psi\in\Delta$. It follows from the induction hypothesis that $\CM,\Delta \models_\l \psi$, which results in a contradiction.

For the opposite direction, suppose $K_a\psi\notin \Gamma$, but $\CM,\Gamma\models_\l K_a\psi$, then for any $\Delta\in \CW$, $\CC(a)\subseteq \CE (\Gamma,\Delta)$ implies $\CM,\Delta\models_\l \psi$.
First, we assert that $\{\neg\psi\}\cup\{\chi\mid K_a\chi\in \Gamma\}\cup\{\neg K_a\neg\chi\mid \chi\in \Gamma\}$ is \KB consistent. If not, note that for any $\eta\in\{\chi\mid K_a\chi\in \Gamma\}$, we have $\neg K_a \neg K_a\eta\in \{\neg K_a\neg\chi\mid \chi\in \Gamma\}$. As $\vdash_\KB \neg K_a \neg K_a \eta \ra \eta$, it follows that $\{\neg\psi\}\cup\{\neg K_a\neg\chi\mid \chi\in \Gamma\}$ is not \KB consistent. Therefore, we have $\vdash_\KB \big(\bigwedge_{\chi\in\Gamma_0}\neg K_a\neg \chi \big) \ra \psi$ for some finite subset $\Gamma_0$ of $\Gamma$. This leads to $\vdash_\KB K_a\big((\bigwedge_{\chi\in\Gamma_0}\neg K_a\neg \chi) \ra \psi\big)$, and hence $\vdash_\KB \bigwedge_{\chi\in\Gamma_0}K_a\neg K_a\neg \chi \ra K_a\psi$. Since we have $\vdash_\KB \chi \ra K_a\neg K_a\neg \chi$ for any $\chi\in\Gamma_0$, it follows that we have $\vdash_\KB \big( \bigwedge_{\chi\in\Gamma_0}\chi \big) \ra K_a\psi$. This deduction implies that $K_a\psi \in \Gamma$, which contradicts our previous assumption.
Now, let us extend the set $\{\neg\psi\}\cup\{\chi\mid K_a\chi\in \Gamma\}\cup\{\neg K_a\neg\chi\mid \chi\in \Gamma\}$ to some maximal \KB-consistent set $\Delta^+$ of \lang-formulas. Notice that $K_a\chi\in \Gamma$ implies $\chi\in\Delta^+$ for any $\chi$. Furthermore, if we suppose $\chi\notin \Gamma$, then $\neg\chi\in \Gamma$, which leads to $\neg K_a\neg\neg\chi\in\Delta^+$, implying $\neg K_a\chi\in\Delta^+$. Therefore, $K_a\chi\in \Delta^+$ implies $\chi\in \Gamma$ for any $\chi$.
Given these stipulations, we find that $\CC(a)\subseteq\CE(\Gamma,\Delta)$. However, by using the induction hypothesis, we see that $M,\Delta\not\models_\l \psi$. As a result, $M,\Gamma\not\models_\l K_a\psi$. This conclusion contradicts our previous assumptions, confirming this direction of the lemma.
\end{proof}

With the Truth Lemma, we can state the following theorem:

\begin{theorem}[completeness of \KB, with a direct proof]
For any \lang-formula $\phi$ and any set $\Phi$ of \lang-formulas, if $\Phi \models_\l \phi$, then $\Phi \vdash_\KB \phi$.
\end{theorem}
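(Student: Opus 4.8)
The plan is to derive strong completeness of \KB for \l from the Truth Lemma (Lemma~\ref{lem:truth-KB}) together with the canonicity lemma, by the standard Lindenbaum-style argument. First I would prove the contrapositive: assume $\Phi \nvdash_\KB \phi$ and show $\Phi \not\models_\l \phi$, i.e., exhibit a model and a state satisfying all of $\Phi$ but not $\phi$. Since $\Phi \nvdash_\KB \phi$, the set $\Phi \cup \{\neg\phi\}$ is \KB-consistent, so by Lindenbaum's Lemma it extends to a maximal \KB-consistent set $\Gamma$ of \lang-formulas. By the canonicity lemma, the canonical model $\CM$ for \l (Definition~\ref{def:cm-el}) is a genuine similarity model, and $\Gamma \in \CW$.

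Next I would apply the Truth Lemma at $\Gamma$: for every $\psi \in \Phi$ we have $\psi \in \Gamma$, hence $\CM,\Gamma \models_\l \psi$, so $\CM,\Gamma \models_\l \Phi$; and since $\neg\phi \in \Gamma$ we get $\CM,\Gamma \models_\l \neg\phi$, i.e.\ $\CM,\Gamma \not\models_\l \phi$. Therefore $\Phi \not\models_\l \phi$, which is exactly the contrapositive of the claim. This closes the argument.

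The one ingredient that needs to be stated carefully — and which I'd call the main (small) obstacle — is Lindenbaum's Lemma in this setting: any \KB-consistent set of \lang-formulas extends to a maximal \KB-consistent one. This is routine since \lang has countably many formulas and \KB has standard closure properties (it contains classical propositional logic and is closed under modus ponens), so the usual enumeration-and-extension construction goes through verbatim; I would simply invoke it. Everything else is bookkeeping: the heavy lifting (that the canonical model is a similarity model, and that membership coincides with truth) has already been done in the canonicity lemma and the Truth Lemma, the latter proved by induction on $\phi$ with the modal case $K_a\psi$ handled explicitly in the excerpt. Hence the proof is essentially a three-line assembly: consistency of $\Phi\cup\{\neg\phi\}$, Lindenbaum extension to $\Gamma\in\CW$, and one application of the Truth Lemma.
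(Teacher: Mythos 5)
Your proposal is correct and follows exactly the paper's own argument: contrapositive, Lindenbaum extension of $\Phi\cup\{\neg\phi\}$ to a maximal \KB-consistent set, and one application of the Truth Lemma in the canonical model. Nothing to add.
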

\begin{proof}
To prove this, suppose the contrary: $\Phi \not\vdash_\KB \phi$. In this case, the set $\Phi\cup\{\neg\phi\}$ can be extended to a maximal \KB-consistent set $\Delta^+$. In the canonical model for \l, denoted \CM, we have $\CM,\Delta^+\models \chi$ for any formula $\chi\in \Phi\cup\{\neg\phi\}$. This conclusion leads to $\Phi \not\models_\l \phi$.
\end{proof}
\vspace{2em}

\section{Truth Lemma for Theorem~\ref{thm:completeness5}}
\label{sec:app-truth}

\begin{lemma}\label{lem:truthdm}
Let $\CM = (\CW, \CA, \CE, \CC, \CV)$ be the standard model for \ldm. For any $s \in \CW$ and any \langdm-formula $\phi$, $\phi\in tail(s)$ if and only if $\CM,s\models_{\ldm}\phi$.
\end{lemma}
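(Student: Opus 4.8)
The plan is to prove the Truth Lemma for the standard model for \ldm by induction on the structure of $\phi$, following the template already established for \ld and \lm, but paying attention to the fact that edges now carry a ``$d$'' or ``$m$'' tag so that the two kinds of successor steps do not interfere. The Boolean cases are immediate from the definition of $\CV$ and the induction hypothesis. For the modal cases $\phi = K_a\psi$, $\phi = D_G\psi$ and $\phi = F_G\psi$, the argument splits into an ``existence'' direction (building a witnessing successor) and a ``persistence'' direction (showing that the label on an existing edge forces the formula to travel).

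For the existence direction of $D_G\psi$: assuming $D_G\psi \notin tail(s)$, I would show the set $\{\NEG\psi\} \cup \{\chi \mid D_G\chi \in tail(s)\} \cup \{\neg D_G\neg\chi \mid \chi \in tail(s)\}$ is \KBDM-consistent (using axiom KD together with BD, exactly as in the \KBD Truth Lemma and Lemma~\ref{lem:truth-KB}), extend it to a maximal consistent $\Delta^+$, and form the canonical path $t$ that extends $s$ with $\langle (G,d), \Delta^+\rangle$. By construction $\{\chi \mid D_G\chi \in tail(s)\} \subseteq tail(t)$ and $\{\chi \mid D_G\chi \in tail(t)\} \subseteq tail(s)$, so by the first clause in Definition~\ref{def:sm-ldm} we get $\bigcup_{a\in G}\CC(a) \subseteq \CE(s,t)$, and the induction hypothesis gives $\CM,t \not\models_\ldm \psi$, hence $\CM,s \not\models_\ldm D_G\psi$. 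The case $\phi = F_G\psi$ is the mirror image using the $(G,m)$-clause, axioms KF, BF, and NF; and $\phi = K_a\psi$ is the instance $D_{\{a\}}\psi$ (equivalently $F_{\{a\}}\psi$) via D1/F1, so one may either treat it directly or note it follows. For the persistence direction of $D_G\psi$: suppose $D_G\psi \in tail(s)$ and $\bigcup_{a\in G}\CC(a) \subseteq \CE(s,t)$; I inspect which clause of the definition of $\CE$ produced this inclusion. The key combinatorial fact, already recorded in the excerpt, is that $\bigcup_{a\in G}\CC(a) = \{H \mid H \cap G \neq \emptyset\}$ while $\bigcap_{a\in G}\CC(a) = \{H \mid G \subseteq H\}$; since $\bigcup_{a\in G}\CC(a)$ is not contained in any $\bigcap_{a\in H}\CC(a)$ unless the relevant witnessing group $H'$ has $H' \cap G \neq \emptyset$ forcing $H' \supseteq G$ in the intersection-clause — one checks that the inclusion can only be witnessed by a $d$-edge for some group $H \supseteq G$ (a $(G',d)$ step with $\{H : H\cap G'\neq\emptyset\}\supseteq\{H:H\cap G\neq\emptyset\}$, i.e. $G'\subseteq G$; here I take $H=G'$... more precisely the $d$-clause for a group $G'$ gives exactly $\{H\mid H\cap G'\neq\emptyset\}$, and containing $\bigcup_{a\in G}\CC(a)$ forces $G'\subseteq G$), or by an $m$-edge $(G'',m)$ with $G \subseteq G''$. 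In the $d$-case, $D_G\psi \in tail(s)$ gives $D_{G'}\psi \in tail(s)$ by axiom D2 (since $G' \subseteq G$), so $\psi$ transfers; in the $m$-case, $D_G\psi$ is not directly handed along an $F$-edge, so one uses that $F_{G''}$-edges with $G\subseteq G''$ combined with D2 again push the matching formula across — here I'd mirror the $H \supseteq G$ argument from the \KBD proof. Either way $\psi \in tail(t)$, and the induction hypothesis gives $\CM,t \models_\ldm \psi$.

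The main obstacle I anticipate is precisely this case analysis in the persistence direction: unlike in \KBD or \KBM alone, an edge of $\CE(s,t)$ with a large enough label could a priori come from either a $d$-clause or an $m$-clause, and one must verify that whichever clause applies, the axioms D2 and F2 (whose group-containments point in opposite directions) still suffice to carry $D_G\psi$ (resp. $F_G\psi$) across. The tagging of edges by ``$d$''/``$m$'' in the canonical paths is exactly what keeps these from colliding, so the proof hinges on carefully exploiting Definition~\ref{def:sm-ldm}'s five-way split together with the lattice facts about $\bigcup_{a\in G}\CC(a)$ and $\bigcap_{a\in G}\CC(a)$ noted just before Theorem~\ref{thm:completeness4}. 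Once the $D_G$ and $F_G$ cases are settled, collecting the directions yields $\phi \in tail(s)$ iff $\CM,s \models_\ldm \phi$, completing the induction.
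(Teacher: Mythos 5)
Your overall strategy is exactly the paper's (Appendix~D): induction on $\phi$, with the ``existence'' half handled by extending $\{\neg\psi\}\cup\{\chi\mid D_G\chi\in tail(s)\}\cup\{\neg D_G\neg\chi\mid \chi\in tail(s)\}$ (resp.\ its $F_G$/$K_a$ analogues) to a maximal consistent $\Delta^+$ and appending $\langle (G,d),\Delta^+\rangle$ or $\langle (G,m),\Delta^+\rangle$, and the ``persistence'' half handled by a case split on which clause of $\CE$ witnesses the inclusion. That part is fine. The problem is that the case split itself is carried out with the set inclusions pointing the wrong way. For a $(G',d)$-edge, $\bigcup_{a\in G}\CC(a)\subseteq\bigcup_{a\in G'}\CC(a)$, i.e.\ $\{H\mid H\cap G\neq\emptyset\}\subseteq\{H\mid H\cap G'\neq\emptyset\}$, forces $G\subseteq G'$ (take $H=\{a\}$ for each $a\in G$), not $G'\subseteq G$ as you assert; and your subsequent appeal to D2 ``since $G'\subseteq G$'' is not an instance of D2, which only licenses passing from a group to a \emph{superset}. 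As it happens, the correct direction $G\subseteq G'$ is precisely what D2 needs, so the argument survives once the inclusion is recomputed --- this is what the paper does: $\{\chi\mid D_H\chi\in tail(s)\}\subseteq tail(t)$ for some $H\supseteq G$, and then $D_G\psi\ra D_H\psi$ by D2.

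The second slip is in the cross-type sub-cases, which are the genuinely new feature of \ldm. For $D_G\psi$, an $m$-edge $(G'',m)$ can witness $\bigcup_{a\in G}\CC(a)\subseteq\bigcap_{a\in G''}\CC(a)$ only when $G=G''=\{a\}$ is a singleton (each $\{a\}$ with $a\in G$ lies in the left-hand set, so $G''\subseteq\{a\}$), not merely when $G\subseteq G''$; if your weaker constraint were the right one, you would need the invalid implication $D_G\psi\ra F_{G''}\psi$ for $G\subsetneq G''$. In the genuine case one transfers via D1 and F1 ($D_{\{a\}}\psi\ra K_a\psi\ra F_{\{a\}}\psi$), not ``D2 again.'' Dually, for $F_G\psi$ a $(J,d)$-edge witnesses $\bigcap_{a\in G}\CC(a)\subseteq\bigcup_{a\in J}\CC(a)$ iff $G\cap J\neq\emptyset$, and the transfer needs the chain F2, F1, D1, D2 to get $F_G\psi\ra D_J\psi$. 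So the tags $d$/$m$ do not in fact keep the two kinds of edges from ``colliding'' --- either kind of clause can witness either kind of inclusion --- and the bridging axioms D1/F1 are indispensable here, not just D2/F2. With the inclusions recomputed and these conversions made explicit, your proof closes and coincides with the paper's.
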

\begin{proof}
The proof is by induction on $\phi$, and we only display the cases for modalities.

Case $\phi = K_a\psi$.
Suppose $K_a\psi\in tail(s)$, but $\CM,s\not\models_\ldm K_a\psi$, then there exists $t\in \CW$ such that $\CC(a)\subseteq \CE (s,t)$ and $\CM,t\not\models_\ldm \psi$. Therefore, $\{\chi\mid D_G \chi \in tail(s)\}\subseteq tail(t)$ for some group $G$ containing $a$ or $\{\chi\mid X_{\{a\}}\chi\in tail(s)\}\subseteq tail(t)$. In both scenarios, $\psi\in tail(t)$ since $K_a\psi\in tail(s)$ implies $D_G\psi, F_{\{a\}}\psi\in tail(s)$. By the induction hypothesis, we have $\CM,t\models_\ldm \psi$, which leads to a contradiction.
Suppose $K_a\psi\notin tail(s)$, but $\CM,s\models_\ldm K_a\psi$, then $\CC(a)\subseteq \CE (s,t)$ implies $\CM,t\models_\ldm \psi$ for any $t\in \CW$. Extend $\{\neg\psi\} \cup \{\chi\mid K_a\chi \in tail(s)\} \cup \{ \neg K_a \neg \chi \mid \chi \in tail(s) \}$ to some maximal \KBDM-consistent set $\Delta^+$, thus $\CC(a)\subseteq \CE(s,t)$ where $t$ extends $s$ with $\langle (\{a\}, d), \Delta^+ \rangle$. By the induction hypothesis, we have $\CM,t\models_\ldm \neg\psi$, which is contradictary.

Case $\phi=D_G\psi$.
Suppose $D_G\psi\in tail(s)$, but $\CM,s\not\models_\ldm D_G\psi$, then there exists some $t\in \CW$ such that $\bigcup_{a\in G}\CC(a)\subseteq \CE (s,t)$ and $\CM,t\not\models_\ldm \psi$. Therefore, $\{\chi\mid D_H\chi\in tail(s)\}\subseteq tail(t)$ for some group $H$ such that $G\subseteq H$. We have $\psi\in tail(t)$ since $D_G\psi\in tail(s)$ implies $D_H\psi\in tail(s)$. By the induction hypothesis, we have $\CM,t\models_\ldm \psi$, which leads to a contradiction.
Suppose $D_G\psi\notin tail(s)$, but $\CM,s\models_\ldm D_G\psi$, then $\bigcup_{a\in G}\CC(a)\subseteq \CE (s,t)$ implies $\CM,t\models_\ldm \psi$ for any $t\in \CW$. Extend $\{\neg\psi\} \cup \{\chi\mid D_G\chi \in tail(s) \cup \{ \neg D_G \neg \chi \mid \chi \in tail(s) \}\} $ to some maximal \KBDM-consistent set $\Delta^+$, thus $\bigcup_{a\in G}\CC(a)\subseteq \CE(s,t)$ where $t$ extends $s$ with $\langle (G,d),\Delta^+\rangle$. We have $\CM,t\models_\ldm \neg\psi$ by the induction hypothesis. A contradiction!

Case $\phi=F_G\psi$.
Suppose $F_G\psi\in tail(s)$, but $\CM,s\not\models_\ldm F_G\psi$, then there exists $t\in \CW$ such that $\bigcap_{a\in G}\CC(a)\subseteq \CE (s,t)$ and $\CM,t\not\models_\ldm \psi$. Therefore, $\{\chi\mid F_H\chi\in tail(s)\}\subseteq tail(t)$ for some group $H$ such that $H\subseteq G$ or $\{\chi\mid D_J\chi\in tail(s)\}\subseteq tail(t)$ for some group $J$ such that $G\cap J\neq\emptyset$. In both scenarios, we have $\psi\in tail(t)$ since $F_G\psi\in tail(s)$ implies $F_H\psi, D_J\psi\in tail(s)$. By the induction hypothesis, we have $\CM,t\models_\ldm \psi$, which leads to a contradiction.
Suppose $F_G\psi\notin tail(s)$, but $\CM,s\models_\ldm F_G\psi$, then $\bigcap_{a\in G}\CC(a)\subseteq \CE (s,t)$ implies $\CM,t\models_\ldm \psi$ for any $t\in \CW$. Extend $\{\neg\psi\} \cup \{\chi\mid F_G \chi \in tail(s) \} \cup \{ \neg F_G \neg \chi \mid \chi \in tail(s) \}$ to some maximal \KBDM-consistent set $\Delta^+$, thus $\bigcap_{a\in G}\CC(a)\subseteq \CE(s,t)$ where $t$ extends $s$ with $\langle (G,m),\Delta^+\rangle$. By the induction hypothesis, $\CM,t\models_\ldm \neg\psi$, which leads to a contradiction.
\end{proof}
\vspace{2em}

\section{Extra Cases for the Proof of Lemma~\ref{lem:truthcdm}}
\label{sec:app-truth2}

The proof of Lemma~\ref{lem:truthcdm} in the main text (p.~\pageref{lem:truthcdm}) only contains the case for common knowledge. Here we supplement the cases for distributed and field knowledge for the careful reader (individual knowledge can be treated as a special case of the two).

Case $\phi=D_G\psi$. The direction from $D_G\psi\in tail(s)$ to $\CM,s\models_\lcdm D_G\psi$ is similarly to the proof of Lemma \ref{lem:truthdm}. For the other direction, suppose $D_G\psi\not\in tail(s)$, but $\CM,s\models_\lcdm D_G\psi$, then $\bigcup_{a\in G}\CC(a)\subseteq \CE(s,t)$ implies $\CM,t\models_\lcdm \psi$ for any $t\in \CW$. Notice that $\{\NEG\psi\}\cup\{\chi\mid D_G\chi\in tail(t_0)\}\cup\{\neg D_G\NEG\chi \in cl(\theta)\mid \chi\in tail(t_0)\}$ is a consistent subset of $cl(\theta)$. Extend it to a maximal \KBCDM-consistent set $\Delta^+$ in $cl(\theta)$. Thus, by a similar method to the proof of $\vdash_\KBCDM \delta \to K_a\psi$ in Lemma~\ref{lem:truthcdm}, we have $\{\chi\mid D_G\chi\in tail(s)\}\subseteq \Delta^+$ and $\{\chi\mid D_G\chi\in \Delta^+\}\subseteq tail(s)$. Let $t$ be $s$ extended with $\langle(G,d),\Delta^+\rangle$, we have $\bigcup_{a\in G}\CC(a)\subseteq \CE(s,t)$. By the induction hypothesis we have $\CM,t\not\models_\lcdm \psi$, contradicting with $\CM,s\models_\lcdm D_G\psi$.

The case when $\phi=F_G\psi$ is similar to the case for distributed knowledge except that we extend the consistent set $\{\NEG\psi\}\cup\{\chi\mid F_G\chi\in tail(t_0)\}\cup\{\neg F_G\NEG\chi \in cl(\theta)\mid \chi\in tail(t_0)\}$ to get a maximal $\Delta^+$ in the closure, and let $t$ be $s$ extended with $\langle(G,m),\Delta^+\rangle$.

\end{document}